\documentclass[journal,onecolumn]{IEEEtran}
\usepackage{cite}
\usepackage{amsmath,amssymb,amsfonts}
\usepackage{algorithmic}
 \usepackage{booktabs}
\usepackage{graphicx}
\usepackage{textcomp}
\usepackage{amsfonts,amssymb,amsmath}
\usepackage{graphicx,graphics,epsfig,color}
\usepackage{multicol}
\usepackage{float}
\usepackage[caption=false]{subfig}
\usepackage{comment}
\usepackage{caption}
\usepackage{diagbox}

\usepackage{graphicx} 
\usepackage{here}
\usepackage{amsmath}
\usepackage{amsthm}
\usepackage{amssymb}
\usepackage{amsfonts}
\usepackage{graphicx}

\usepackage{xpatch}
\xpatchcmd{\paragraph}{\normalfont}{{\normalfont\bfseries}}{}{}
\usepackage{color}
\usepackage{dsfont}

\usepackage{nicematrix}

\usepackage[ruled,vlined,linesnumbered]{algorithm2e}

\newtheorem{theorem}{Theorem}
\newtheorem{remark}{Remark}

\newtheorem{corollary}{Corollary}
\newtheorem{definition}{Definition}

\newcommand{\R}{{\mathbb{R}}}

\newcommand{\N}{{\mathbb{N}}}

\newcommand{\uu}{{\mathbf{u}}}
\newcommand{\uuu}{{\mathbf{\tilde{u}}}}
\newcommand{\dd}{{\mathbf{d}}}
\newcommand{\nex}{\mathord{\bigcirc}}

\usepackage{hyperref}

\usepackage{enumitem}
\usepackage{balance}
\definecolor{olivegreen}{rgb}{0.14,0.29,0}

\newif\ifitsdraft
\def\itsdraft{\global\itsdrafttrue}

\itsdraft  
\ifitsdraft

\definecolor{gray}{rgb}{0.33,0.4,0.47}

\definecolor{steelblue}{rgb}{0,.42,.7}

\definecolor{britishgreen}{rgb}{0,0.26,0.15}
\definecolor{navyblue}{rgb}{0,0,.8}
\definecolor{olivegreen}{rgb}{0.14,0.29,0}
\definecolor{myred}{rgb}{0.86,0.1,0.16}
    
\newcounter{al}    \newcounter{ss}

    \allowdisplaybreaks
    \pdfoutput = 1
    
     \usepackage{soul} 
\begin{document}
\title{Resilient and Effort-Optimal Controller Synthesis under Temporal Logic Specifications}
% \author{Adnane Saoud, Ryan  S. Johnson, and Ricardo G. Sanfelice, 
% \thanks{A. Saoud is with the College of Computing, University Mohammed VI Polytechnic, Benguerir, Morocco {e-mail:adnane.saoud@um6p.ma}). Ryan S. Johnson and Ricardo  G.  Sanfelice are with the Dept. of Electrical and Computer Engineering, University of California,  Santa Cruz,  CA,  USA (e-mail:ricardo@ucsc.edu,rsjohnso@ucsc.edu). \\ Research by R. G. Sanfelice partially supported by NSF Grants no. CNS-2039054 and CNS-2111688, by AFOSR Grants nos. FA9550-23-1-0145, FA9550-23-1-0313, and FA9550-23-1-0678, by AFRL Grant nos. FA8651-22-1-0017 and FA8651-23-1-0004, by ARO Grant no. W911NF-20-1-0253, and by DoD Grant no. W911NF-23-1-0158.}}

\author{Youssef Ait Si, Ratnangshu Das, Negar Monir,  Sadegh Soudjani, Pushpak Jagtap, and Adnane Saoud
\thanks{Youssef Ait Si and Adnane Saoud are with the College of Computing, University Mohammed VI Polytechnic, Benguerir, Morocco.  (e-mail: \{youssef.aitsi, adnane.saoud\}@um6p.ma) }
\thanks{Ratnangshu Das and Pushpak Jagtap are The Robert Bosch Centre for Cyber-Physical Systems, IISc, Bangalore, India (e-mail: \{ratnangshud, pushpak\}@iisc.ac.in).}
\thanks{Negar Monir is with 
the Newcastle University, Newcastle upon Tyne, United Kingdom (e-mail: S.Seyedmonir2@newcastle.ac.uk).}
\thanks{Sadegh Soudjani is with the Max Planck Institute for Software Systems, D-67663 Kaiserslautern, Germany, and with the University of Birmingham, United Kingdom (e-mail: sadegh@mpi-sws.org).}
}

\maketitle

\begin{abstract}
In this paper, we consider the notions of effort and resilience of a dynamical control system defined by the maximum disturbance the system can withstand while satisfying given finite temporal logic specifications. Given a dynamical system and a specification, the objective is to synthesize the controller such that the system satisfies the specification while maximizing its resilience, taking into account input constraints. In addition, we introduce a new metric, called the \textit{effort metric}, which characterizes the minimal input bound necessary to satisfy a given specification for a perturbed system. The problem for both metrics is formulated as a robust optimization program where the objective is to compute the maximum resilience for the system with input constraints or the minimal effort while simultaneously synthesizing the corresponding controller parameters. Moreover, we study the trade-off between resilience and effort, where we seek to maximize resilience and minimize the control effort. For linear systems and linear controllers, exact solutions are provided for the class of time-varying polytopic specifications for the closed-loop and open-loop systems. For the case of nonlinear systems, nonlinear controllers, and more general specifications, we leverage tools from the scenario optimization approach, offering a probabilistic guarantee of the solution as well as computational feasibility. Different case studies are presented to illustrate the theoretical results. 
\end{abstract}

% \begin{IEEEkeywords}
% Dynamical system, resilience, constrained control, optimal control, temporal logic.
% \end{IEEEkeywords}

\section{INTRODUCTION}

Real-world control systems must operate under disturbances, model uncertainties, and varying environmental conditions. Autonomous vehicles must maintain safe operation despite changing road conditions and sensor noise \cite{vargas2021overview}, robotic manipulators are required to perform tasks accurately in unpredictable environments \cite{kolhe2013robust}, and power grids must preserve stability under fluctuating demands, faults, and component failures \cite{zhang2021grid}. Traditional robust control approaches aim at guaranteeing stability and performance under bounded disturbances, typically assuming that admissible disturbance and input bounds are known a priori. While these methods provide strong safety and performance guarantees, they primarily focus on feasibility and robustness verification, rather than explicitly quantifying the limits beyond which a system can no longer satisfy its intended specifications.

This limitation has motivated the introduction of resilience metrics, which aim to quantify the maximum disturbance a system can tolerate while still ensuring correct behavior. Resilience is particularly relevant when dealing with temporal specifications \cite{fainekos2009robustness}, which impose complex constraints on system trajectories over time, including safety, reachability, and other temporal requirements. Temporal logic has become a powerful formalism in control synthesis \cite{kloetzer2008fully}, enabling the specification of behaviors such as “reach a target within a given time window” or “avoid unsafe regions at all times”. Although robust and optimal control methods can enforce such specifications under bounded uncertainties \cite{sadraddini2015robust,das2025spatiotemporal}, they do not directly aim at quantifying the maximum disturbance magnitude for which the specification remains satisfied.

Several notions of resilience have been investigated in the control systems literature \cite{rieger2009resilient, AitS2025}. In \cite{rieger2009resilient}, resilience is associated with the system’s ability to maintain situational awareness and functionality under adverse conditions, while \cite{zhu2011robust} proposes a holistic framework for robust and resilient control in power systems, with a focus on voltage regulation under faults and attacks. More recently, \cite{chen2023stl} introduced a resilience framework combining recoverability and durability, and \cite{monir2025AGC} demonstrated how resilience concepts can be used to derive feasible assume–guarantee contracts. 
However, most of these works focus on recovery after rare or extreme events and do not explicitly quantify the maximum admissible disturbance compatible with specification satisfaction.

The works closest in spirit to our approach are \cite{Saoud2023, saoud2024}, which deal with resilience for autonomous systems. However, moving from autonomous to controlled systems is a significant challenge. The central problem shifts from computing the maximal resilience to designing the controller that simultaneously ensures maximal resilience, adheres to input constraints, and makes the system satisfy the specification even if initially not satisfied. This integration of optimal control synthesis with input constraints represents a significant and previously unaddressed complexity, as the space of possible controllers, such as linear open-loop (state-independent), linear closed-loop (state-dependent), and nonlinear controllers, must be explored in a computationally tractable manner. Our contribution directly addresses this gap, providing a closed-form expression for computing the provably optimal constrained controller that achieves the highest possible level of system resilience.

While the resilience captures a system's intrinsic robustness, it does not account for the \emph{control input magnitude} required to achieve it. In practice, actuators are subject to strict amplitude, energy, or rate constraints. It is therefore crucial to determine whether a desired level of resilience can be attained within the available actuation limits. This observation motivates a complementary analysis: rather than fixing the input bounds and asking for the maximal tolerable disturbance (resilience), we ask: \emph{what is the minimal input bound necessary to guarantee specification satisfaction for a given level of disturbance?} We term this quantitative measure the \emph{effort metric}. Notably, when the disturbance is set to the system's maximal resilience, this metric defines the minimal control bound needed to achieve the best possible robustness.

Early works characterized the sets of admissible controls that render a system invariant or reachable \cite{gilbert1995linear,blanchini1999set}, while viability theory and constrained control directly link the existence of feasible control policies to the size of the admissible input set \cite{aubin1991viability,kolmanovsky1998theory}. Robust and constrained Model Predictive Control (MPC) frameworks explicitly enforce hard input bounds while guaranteeing constraint satisfaction against disturbances \cite{mayne2000constrained,bemporad2007robust,rawlings2017model}. A thread in these approaches is that the control bounds are treated as fixed design parameters.

Closely related is the literature on the minimal control bound, which assesses whether a given task can be accomplished with limited actuation \cite{borrelli2003constrained}, \cite{scokaert1999min}. Techniques from robust optimization provide tools to analyze feasibility under uncertainty and characterize constraint tightness \cite{ben2009robust,bertsimas2011theory}. In reachability analysis, studies have examined how invariant and reachable sets scale with input bounds, revealing trade-offs between disturbance rejection and control effort \cite{rakovic2005invariant}. These ideas extend to specification-driven control and temporal-logic synthesis, where feasibility is inherently tied to available control bounds \cite{haesaert2021robust,nikou2016controller}. 

Despite these advances, existing methods predominantly treat input bounds as given. There is a lack of a framework that explicitly quantifies the minimal control effort, in terms of the required input bound, to satisfy complex temporal specifications against worst-case disturbances. In contrast, this work elevates the input bound to a \emph{decision variable}. We introduce the effort metric as a measure of the actuation resources needed to enforce temporal logic specifications robustly, providing a direct bridge between specification complexity, disturbance levels, and necessary control bounds.

This paper builds upon these concepts by formulating both resilience and effort metrics as robust optimization problems. The objective is to design an optimal controller that guarantees the system satisfies a complex specification, ensuring all trajectories maintain a desired behavior, while simultaneously achieving one of two complementary goals, either determining the largest set of admissible disturbances for a given bounded input (resilience), or finding the minimal input constraints required to withstand a prescribed level of perturbations (effort). In addition, we define the trade-off between resilience and effort metric by assigning weight to each metric to find the optimal trade-off for the chosen weights, which we denote as the resilient-effort metric.

The contributions developed in this paper are as follows: defining the resilience metric with input constraints and the effort metric for a perturbed dynamical system as a robust optimization problem. In the particular case of a time-varying linear system with a linear closed-loop controller, we translate this robust optimization problem into a tractable optimization problem with polynomial constraints, yielding exact solutions. In the case of an open-loop controller for the resilience metric, the optimization problem reduces to a linear program, which allows for efficient and reliable computation using well-established linear programming techniques \cite{boyd2004convex}. In the general case of nonlinear systems and nonlinear controllers and a general finite-horizon specification, we leverage scenario optimization techniques to compute an approximation of the resilience and the effort metric with probabilistic guarantees. Moreover, we define a general framework for finite specifications, where finite-horizon safety and exact time reachability are considered as particular cases. Table~\ref{tab:paper_structure} summarizes the structure of the paper, which is organized into three main parts, each following the same structure for every metric. First, we show the theoretical results presented as theorems for the resilience–effort metric, considering both linear and nonlinear systems. For the linear case, we distinguish between closed-loop and open-loop controllers. The results for the individual resilience and effort metrics are then derived as corollaries.

\begin{table}[t]
\centering
\caption{Organization of theoretical results across system classes and performance metrics.}
\label{tab:paper_structure}
\small
\setlength{\tabcolsep}{4pt}
\begin{tabular}{lccc}
\toprule
 & \multicolumn{2}{c}{\textbf{LTV system}} & \textbf{Nonlinear system} \\
\cmidrule(lr){2-3} \cmidrule(l){4-4}
\textbf{Metric} & \textbf{Closed-loop} & \textbf{Open-loop} & \textbf{Nonlinear} \\
 & \textbf{Controller} & \textbf{Controller} & \textbf{Controller} \\
\midrule
Resilience-effort & Thm.~\ref{Theorem:7} & Thm.~\ref{Theorem:8} & Thm.~\ref{theorem:9} \\
Resilience & Corl.~\ref{theorem:1} & Corl.~\ref{theorem:2} &  \\
Effort & Corl.~\ref{Theorem:4} & Corl.~\ref{Theorem:5} &  \\
\bottomrule
\end{tabular}
\end{table}
The current paper extends the preliminary results presented in \cite{ait2025maximal} in the following directions.

\begin{enumerate}
    \item We define a new resilience metric with input constraints that generalizes the metric without input constraints from \cite{ait2025maximal}. Moreover, we provide an approach to compute this new metric.
    \item We introduce a novel effort metric that quantifies the minimal control input bound necessary to satisfy a specification for a disturbed system. We then define the resilience-effort metric for the trade-off between resilience and effort that quantifies the optimal disturbance and input bound for the chosen weights for resilience and effort metrics.
    \item For the computation of both the resilience and effort metrics, we also study the system under open-loop controllers. With appropriate reformulations, the corresponding robust optimization problems can be transformed into linear programs enabling exact and computationally efficient solutions.
    \item We generalize the results to linear time-varying (LTV) systems, which subsumes the results in~\cite{ait2025maximal} applicable only to time-invariant linear systems.
    \item  We provide additional details on numerical examples and two new case studies demonstrating the applicability of the proposed concepts. 
\end{enumerate}

The remainder of the paper is organized as follows. Section~\ref{sec:2} introduces the controlled system model, temporal specifications, together with the definitions of the resilience, effort, and resilience-effort metrics. Section~\ref{sec:3} addresses the computation of the resilience-effort metric for both linear time-varying systems and nonlinear systems. For linear systems, we study linear closed-loop and open-loop controllers, where exact solutions are derived. Section~\ref{sec:4} and~\ref{sec:5} develop analogous results for resilience and effort metrics, respectively, including exact formulations in the linear case and extensions to nonlinear dynamics. Section~\ref{sec:6} presents numerical results illustrating the proposed methods.
\section{Preliminaries and Problem Formulation}

\label{sec:2}
\textbf{Notations:}
The symbols $ \mathbb{N} $, $ \mathbb{N}_{\geq 0} $, $ \R$, and $\R_{\geq 0}$ denote the set of positive integers, nonnegative integers, real, and non-negative real numbers, respectively. We use $\mathbb{R}^{n \times m}$ to denote the space of real matrices with $n$ rows and $m$ columns. For a matrix $A \in \mathbb{R}^{n \times m}$, $A^T$ represents the transpose of $A$ and $A \geq 0$ denotes a matrix non-negative elements. For a vector $x \in \mathbb{R}^n$, we use $\|x\|$ and $\|x\|_{\infty}$ to denote the Euclidean and infinity norm, respectively. We denote the $n \times n$ identity matrix by $\mathbb{I}_n$, the $n \times n$ zero matrix by $0_n$, and the $n \times m$ zero matrix by $0_{n \times m}$. We denote by $\mathbf{1}_n  \in \mathbb{R}^n$ the vector whose elements are all equal to one. Given $x \in \mathbb{R}^n$ and $\varepsilon \geq 0$, we define $\Omega_{\varepsilon}(x)=\left\{z \in \mathbb{R}^n \mid\|z-x\|_{\infty} \leq \varepsilon\right\}$ and $\mathcal{B}_{\varepsilon}(x)=\left\{z \in \mathbb{R}^n \mid\|z-x\| \leq \varepsilon\right\}$. The combination of \(k \in  \N\) items chosen from \(n \in \N\) distinct items is given by the formula $\binom{n}{k} = \frac{n!}{k!(n-k)!}$, where \(n!\) is the factorial of \(n\).
\subsection{Discrete-time Dynamical Systems}
\label{system definition}
A time-varying discrete-time control system can be defined as a tuple $\Sigma = (X,U,D,f)$, where $X\subset \mathbb{R}^n$ is the state space, $U\subset \mathbb{R}^m$ is the input space,
$D\subset\mathbb{R}^n$ is the disturbance space, which is assumed to be a compact set and contains the origin, and $f:  \N \times X \times U \rightarrow  X$ is a continuous map representing the system dynamics. 
The system $\Sigma$ is evolving according to the following dynamics:
\begin{equation}
\label{eqn:non linear controlled system}
x(k+1) = f(k, x(k) , u(k)) + d(k),\quad  k \in \mathbb{N},
\end{equation}
where $x(k) \in X$, $u(k)  \in U $, and $d(k)\in D$ represent the system state, system input, and the additive disturbance, respectively, at time $k$. 
In this work, the system $\Sigma$ is controlled by a state feedback controller $\pi_\alpha$ defined by $\pi_\alpha : X \to U$ such that \( \pi_\alpha(x(k)) =  u(k)\), where $ \alpha \in \R^{d}$ represent the collection of parameters of the controllers. A simple example is a linear controller, which can be modeled by the function $\pi_\alpha(x) = \alpha_1 x + \alpha_2$, where $(\alpha_1, \alpha_2) \in \mathbb{R}^{m\times n} \times \mathbb{R}^{m}$ are two real matrices of dimension $m \times n$ and $m \times 1$, respectively. In this context, the set of parameters $\alpha$ is given by $\alpha = ( \alpha_1, \alpha_2) \in \R^{m\times n + m}$. Another notable example is polynomial controllers \cite{polynomial2013}. 
Consider the control function \(\pi_\alpha(x): X \rightarrow U\) defined as a polynomial of degree \(l\) given by \(\pi_\alpha(x) = \alpha \phi(x)\), where \(\phi(x) = \left[ 1, x^{[1]}, x^{[2]}, \ldots, x^{[l]} \right]\) is a vector of monomials of degree up to \(l\), with each \(x^{[i]} \in \mathbb{R}^{d(n,i)}\) containing all distinct monomials of degree \(i\) with no repeated elements. The dimension of \(x^{[i]}\) is \(d(n,i) = \binom{n+i-1}{n-1}\) and the total number of distinct monomials up to degree \(l\) is \(D(n,l) = \sum_{i=0}^{l} d(n,i) = \binom{n+l}{n}\), which is the dimension of the vector \(\phi(x)\). In this setup, we have $\alpha = [\alpha_0, \alpha_1, \ldots, \alpha_l]$ representing a vector of dimension $ m \times D(n,l)$, where $\alpha_i \in \mathbb{R}^{m \times d(n, i)}$.
Moreover, we consider the case of an open-loop controller where the control actions are determined by a time-varying sequence \( \uu =  (u_0, \dots, u_{N-1}) \in  U^N\) for a time horizon $N \in  \N$.
% A third example is when the controller has the form of a feedforward neural network \cite{miller1995neural}, given by \[ \pi_\alpha(x)
%  = W_L \sigma_{L-1}\big(W_{L-1} \sigma_{L-2}\big(\cdots \sigma_1(W_1 x + b_1\big)\cdots\big) + b_L,
% \]
% where $L$ is the number of layers, \( W_i \in \mathbb{R}^{d_i \times d_{i-1}} \) represents the weight matrix for layer \( i \) with \( d_0 = n \), $d_L = m$ and \( b_i \in \mathbb{R}^{d_i} \) is the bias vector for layer \( i \) and  \( \sigma_i \) is the activation function for layer \( i \) (e.g., $\ReLU$, $\tanh$) \PJ{element-wise?} \textcolor{red}{yes}.
%  In this case, the set of parameters $\alpha \in \mathbb{R}^p$ represents the weights and biases of the neural network with dimension equal to $ \sum_{i = 0}^L( d_i \times d_{i-1} + d_i)$.

\subsection{Temporal specification}
\label{Sec:specfication}
Consider the system $\Sigma$ in~\eqref{eqn:non linear controlled system}. A specification $\psi \subset X^{N+1}$ is a set of admissible state sequences that defines the desired behavior of the system over a bounded time horizon $N \in \N$. This class of specifications is quite rich, and can cover specifications such as safety, reachability, and more complex linear temporal logic specifications over finite traces, LTL$_f$ \cite{LTLf17}. For example, an exact-time reachability at time $M \in \{0,1,\ldots,N\}$ of a set $A \subseteq X$, which is written in LTL$_f$ as $\psi = \nex^MA$, can be formulated as $$\psi = X^M\times A \times  X^{N-M} \subseteq X^{N+1}.$$
Similarly, finite-horizon reachability of a set $A \subseteq X$ between time instances $M_1$ and $M_2$, with $0 \leq M_1 < M_2 \leq N$, which is denoted by $\psi =\lozenge^{[M_1,M_2]} A$, can be formulated as
$$\psi= \bigcup\limits_{i=M_1}^{M_2}X^{i}\times A \times  X^{N-{i}} \subseteq X^{N+1}. $$
Finally, the finite-horizon safety of a set $A \subseteq X$ between time instances $M_1$ and $M_2$, with $0 \leq M_1 < M_2 \leq N$, which is denoted by $\psi =\square^{[M_1,M_2]} A$, can be formulated as
$$\psi= \bigcap\limits_{i=M_1}^{M_2}X^{i}\times A \times  X^{N-{i}} \subseteq X^{N+1}. $$
For the system $\Sigma$ in (\ref{eqn:non linear controlled system}), $ \xi( x, \pi_\alpha,\mathbf{d})$ is the set of state-input trajectories of the closed-loop system, over a bounded time horizon of length $N \in \N$, starting from a state $ x \in X$ under the feedback controller $\pi_\alpha$ and any disturbance input $ \mathbf{d} = (d(0),\ldots, d(N-1)) \in D^N$, defined formally as follows:
\begin{equation*}
\begin{aligned}
\label{trajectories set}
    \xi(x, \pi_\alpha, D&) =  \{\big((x(0), u(0)), (x(1),u(1)), \dots, (x(N)\big) \mid x(0) = x,\\ &x(k+1) = f(k, x(k) , u(k)) + d(k), u(k) = \pi_\alpha(x(k)), \\&  
    \text{ for all } d(k)   \in D   \text{ with } k \in \{0, \dots, N-1\}\}.
\end{aligned}
\end{equation*}
Additionally, we denote the projection of the trajectories on the state space as $ \xi_x(x, \pi_\alpha, D) \subseteq X^{N+1}$ and the projection on the input space as $\xi_u(x,\pi_\alpha, D) \subseteq U^{N}$ for $N \in \N $. The system $\Sigma$ in (\ref{eqn:non linear controlled system}) starting from $ x \in X$ under the feedback controller $\pi_\alpha$ is said to satisfy the specification $\psi \subseteq X^{N+1}$ if $\xi_x( x,\pi_\alpha,D) \subseteq \psi$.

In the rest of the paper, we focus on disturbance sets and input sets defined by balls centered at zero with respect to the infinity norm, denoted as \( D := \Omega_{\mu}(0) \) and $U:= \Omega_{\varepsilon}(0)$ respectively. Note that if the center of the ball is not zero, we can change the dynamics $f$ to make it zero. For simplicity, we use the shorthand notation 
\[
\xi(x, \pi_\alpha, \mu) := \xi(x, \pi_\alpha, \Omega_{\mu}(0)),
\]
and similar notation for \( \xi_x \) and \( \xi_u \). All of the preceding notation readily generalizes to open-loop control sequences by replacing a feedback policy $\pi_\alpha$ with a sequence of inputs $\uu = (u_0, \dots, u_{N-1}) \in  U^N$. For example, we write $\xi(x, \uu, \mu) $ to denote the set of all state-input trajectories, over the horizon $N \in \N$, starting from $ x \in X$ under the open-loop controller $\uu$ and disturbances $ \mathbf{d} \in \Omega_{\mu}(0)^N$.
\subsection{Resilience metric}
In this section, we define the resilience metric for a system in~\eqref{eqn:non linear controlled system} with the set of disturbances and inputs given by balls centered at zero: $D := \Omega_{\mu}(0)$ and $U:= \Omega_{\varepsilon_0}(0)$. This definition extends the concept originally introduced for autonomous systems in \cite{saoud2024, ait2025maximal}, which does not take into account input constraints.
\begin{definition}[Resilience Metric]
\label{def:controlled_resilience}
Consider the system $\Sigma$ in~\eqref{eqn:non linear controlled system}, a specification $\psi \subseteq X^{N+1}$, a controller $\pi_\alpha$ described in Section~\ref{system definition} and a point $x \in X$. We define the resilience metric  $g_{\psi}:X \times \mathbb{R}_{\ge 0} \rightarrow \mathbb{R}_{\ge 0}\cup\{+\infty\}$ of the system $\Sigma$ with respect to the initial state $x$ and the specification $\psi$ under the closed-loop controller  \(\pi_\alpha(x): X \rightarrow  \Omega_{\varepsilon_0} (0)\) for $\varepsilon_0 \geq 0$ as
\begin{equation}
\label{eq:resilience_closed}\hspace{-0.2em} 
g_\psi(x, \varepsilon_0): = 
\begin{cases}
\begin{aligned}
&\sup\big\{\mu\ge 0\,|\exists \alpha \in \R^d,\text{s.t} \\
& \hspace{2cm}\,\xi(x, \pi_\alpha,\mu)\subseteq \psi  \times  \Omega_{\varepsilon_0} (0)^N  \big\}, \\ 
& \hspace{1cm} \text{ if } \exists \alpha \in \R^d , \xi(x, \pi_\alpha,0) \in\psi\times  \Omega_{\varepsilon_0} (0)^N,\\
&0,  \hspace{0.7cm} \text{ if } \forall \alpha \in \R^d , \xi( x, \pi_\alpha,0)\notin \psi\times  \Omega_{\varepsilon_0} (0)^N.
\end{aligned}
\end{cases}
\end{equation}
For the open-loop controller, the definition of the resilience metric $g_{\psi}$ under the input sequence $\uu = (u_0, u_1, \dots,u_{N-1})\in \Omega_{\varepsilon_0} (0)^N$ for $\varepsilon_0 \geq 0$ becomes
\begin{equation}
\label{eq:resilience_open}\hspace{-0.2em} 
g_\psi(x, \varepsilon_0) = 
\begin{cases}
\begin{aligned}
&\sup\big\{\mu\ge 0\,|\exists\uu \in \Omega_{\varepsilon_0} (0)^N,\, \\& \hspace{2.5cm} \xi_x(x, \uu,\mu)\subseteq \psi    \big\}, \\ 
& \hspace{0.6cm} \text{ if } \exists \uu \in \Omega_{\varepsilon_0} (0)^N, \xi_x(x, \uu,0) \in\psi,\\
&0,  \hspace{0.3cm} \text{ if } \forall \uu \in \Omega_{\varepsilon_0} (0)^ N , \xi_x( x, \uu,0)\notin \psi .
\end{aligned}
\end{cases}
\end{equation}
We denote $g_\psi(x)$ the resilience metric in the absence of input constraints, i.e., $g_\psi(x) = g_\psi(x, 0)$.
\end{definition}

% The resilience metric can also be defined for a set $ A \subseteq X $ and the specification $\psi$ as follows: 
% \begin{equation}
% \hspace{-0.2em} 
% g_\psi(A) = 
% \begin{cases}
% \begin{aligned}
% &\sup_{\mu \geq 0, \alpha \in \R^d}\big\{\mu\ge 0\,|\,\xi_x(A, \pi_\alpha,\mu)\subseteq \psi \big\}, \\ 
% & \hspace{2.2cm}  \text{if }\forall x \in A, \exists \alpha \in  \R^d,  \xi(x,\pi_\alpha, 0)\in\psi\\
% &0,  \hspace{1.5cm} \text{if }\exists x \in A ,\forall \alpha \in A\times \R^d, \xi_x(x, \pi_\alpha, 0)\not\in\psi
% \end{aligned}
% \end{cases}
% \end{equation}

% % how we can update this definition in the negative case to add something on it.

This definition formulates the resilience metric \( g_\psi \) that evaluates for a given $x \in X$ the maximum disturbance bound \(\mu\) and the optimal parameter \(\alpha\) ensuring the trajectories in \(\xi_x(x, \pi_\alpha,\mu)\) satisfy the specification \(\psi\) while respecting input constraints \( \Omega_{\varepsilon_0}(0)^N \). The notation distinguishes between set inclusion (\(\subseteq\)) for disturbed trajectories and element membership (\(\in\)) for nominal cases because \(\xi(x, \pi_\alpha,\mu)\) represents all possible state-input trajectories under any disturbance sequence \(\dd = (d(0),\dots,d(N-1)) \in \Omega_\mu(0)^{N}\), while \(\xi_x(x, \pi_\alpha,0)\) refers to one single nominal state trajectory without disturbances.
The case of $g_\psi(x, \varepsilon_0)$ equal to zero corresponds to the case where there is no controller that can lead the nominal state trajectory \(\xi_x(x, \pi_\alpha,0)\) to satisfy the specification $\psi$. The same reasoning applies to open-loop controllers by replacing the policy $\pi_\alpha$ by a sequence of inputs $\uu \in \Omega_{\varepsilon_0}(0)^N$.
\subsection{Effort metric}
In the next definition, we introduce the effort metric, which measures the minimum input magnitude required to drive a discrete-time control system, under a predefined level of disturbance given by a ball centered at zero, $D := \Omega_{\mu_0}(0)$, to satisfy a given specification over a finite horizon.

\begin{definition}[Effort Metric]
\label{def:effort_metric_unified}
Consider the system \( \Sigma \) in~\eqref{eqn:non linear controlled system}, a specification $\psi \subseteq X^{N+1}$, a controller $\pi_\alpha$ as in Section~\ref{system definition}, an initial state  \( x \in X \), and a disturbance bound $\mu_0 \geq0$. We define the effort metric \( h_\psi : X\times \R_{\geq 0} \to \mathbb{R}_{\geq 0} \cup \{+\infty\} \) of the system $\Sigma$ with respect to the initial state $x$, the specification $\psi$, and the closed-loop controller \(\pi_\alpha(x): X \rightarrow U\) as
\begin{equation}
\label{def:effort_closed}\hspace{-0.2em} 
h_\psi(x, \mu_0): = 
\begin{cases}
\begin{aligned}
& \inf\big\{\varepsilon \geq 0\,\mid \exists\alpha \in \R^d \, \text{s.t.} \\ 
& \hspace{1.45cm}  \xi(x, \pi_\alpha, \mu_0) \subseteq \psi \times\Omega_{\varepsilon}(0)^N \big\}, \\
& \hspace{0.9cm} \text{ if } \exists \alpha \in \R^d , \xi_x(x, \pi_\alpha, 0) \in\psi,\\
&0,  \hspace{0.6cm} \text{ if } \forall \alpha \in \R^d , \xi_x( x, \pi_\alpha, 0)\notin \psi.
\end{aligned}
\end{cases}
\end{equation}

For the open-loop controller $\uu \in U^N$, the definition of the effort metric is
\begin{equation}
\label{def:effort_open}\hspace{-0.2em} 
    h_\psi(x, \mu_0) := 
    \begin{cases}
\begin{aligned}
&\inf\big\{ \varepsilon \ge 0 \,\mid  \exists \uu \in \R^{m\times N}, \,  \text{s.t.} \\
& \hspace{1.4cm} \xi(x, \mathbf{u}, \mu_0) \subseteq \psi\times \Omega_{\varepsilon}(0)^N \big\},\\
 & \hspace{0.8cm} \text{ if } \exists \uu \in \Omega_{\varepsilon}(0)^N , \xi_x(x, \uu,0) \in\psi,\\
&0,  \hspace{0.5cm} \text{ if } \forall \uu \in \Omega_{\varepsilon}(0)^N , \xi_x( x, \uu,0)\notin \psi.
\end{aligned}
\end{cases}
\end{equation}
We denote $h_\psi(x)$ the effort metric for the nominal dynamical system without perturbations, i.e., $h_\psi(x) = h_\psi(x, 0)$.
\end{definition}
\subsection{Resilience-effort metric}
In the next definition, we introduce the resilience-effort metric, which measures the trade-off between maximal resilience and minimal effort required to drive a system, using predefined weights $w_1, w_2 \ge 0$ for each metric, to satisfy a given specification over a finite horizon.
\begin{definition}[Resilience-Effort Metric]
\label{def:r_e_metric_unified}
Consider system \( \Sigma \) in~\eqref{eqn:non linear controlled system}, a specification $\psi \subseteq X^{N+1}$ and a controller $\pi_\alpha$ as in Section~\ref{system definition}, an initial state \( x \in X \) and positive weights $w_1, w_2 \geq0$. We define the resilience-effort metric \( p_\psi : X\times \R_{\geq 0} \to \mathbb{R}_{\geq 0} \cup \{+\infty\} \) of the system $\Sigma$ with respect to the initial state $x$, and the specification $\psi$ under the closed-loop controller \(\pi_\alpha(x): X \rightarrow U\) as
\begin{equation*}
% \label{def:r_e_closed}\hspace{-0.2em} 
p_\psi(x, w_1, w_2): = 
\begin{cases}
\begin{aligned}
& \sup_{} \big\{ w_1  \mu - w_2\varepsilon\,\mid \mu \ge 0, \varepsilon \geq 0, \exists \alpha \in \R^d   \\ & \hspace{1.7cm} \text{ s.t.} \; \xi(x, \pi_\alpha, \mu)  \subseteq \psi \times\Omega_{\varepsilon}(0)^N \big\}, \\
& \hspace{1.9cm} \text{ if } \exists \alpha \in \R^d , \xi_x(x, \pi_\alpha, 0) \in\psi,\\
&0,  \hspace{1.6cm} \text{ if } \forall \alpha \in \R^d , \xi_x( x, \pi_\alpha, 0)\notin \psi.
\end{aligned}
\end{cases}
\end{equation*}

For the open-loop controller $\uu \in U^N$, the definition of the resilience-effort metric is
\begin{equation*}
\label{def:r_e_open}\hspace{-0.2em} 
    p_\psi(x, w_1, w_2) := 
    \begin{cases}
\begin{aligned}
&\sup\big\{ w_1  \mu - w_2\varepsilon \mid \mu \geq 0, \varepsilon \geq 0, \exists \uu \in \R^{m\times N}   \\ & \hspace{1.6cm}\text{s.t.} \; \xi(x, \mathbf{u}, \mu)   \subseteq \psi\times \Omega_{\varepsilon}(0)^N \big\},\\
 & \hspace{1.5cm} \text{ if } \exists \uu \in \Omega_{\varepsilon}(0)^N , \xi_x(x, \uu,0) \in\psi,\\
&0,  \hspace{1.2cm} \text{ if } \forall \uu \in \Omega_{\varepsilon}(0)^N , \xi_x( x, \uu,0)\notin \psi.
\end{aligned}
\end{cases}
\end{equation*}
\end{definition}
\begin{remark}
\label{substitution min_inf}
Since we assumed $f$ is a continuous function, the closed-loop system follows continuous dynamics. Hence, when considering closed specifications, the supremum/infimum operator in the definition of resilience/effort can be replaced by the maximum/minimum operator \cite{saoud2024}. 
% This substitution will be adopted in the rest of this paper when dealing with closed spaecification.  \textcolor{red}{I will add the fact that we use a closed specification for the linear system part. Linear and not linear is not defined yet here.}
\end{remark}

\subsection{Problem formulation}
Consider the system $\Sigma$ in~\eqref{eqn:non linear controlled system}, a specification $\psi \subseteq X^{N+1}$, a controller template $\pi_\alpha$ as in Section~\ref{system definition} and an initial state $x \in X$. The objective is to design the optimal controller $\pi_\alpha$ such that all the trajectories satisfy the specification $\psi$, i.e., $\xi_x(x, \pi_\alpha,\mu) \subseteq \psi$ taking into account input constraints for the input trajectories $\xi_u(x, \pi_\alpha,\mu) \subseteq \Omega_{\varepsilon_0}(0)^N$ for a given bound $\varepsilon_0 \geq 0$, while also maximizing the resilience metric $g_{\psi}(x, \varepsilon_0)$. For the effort metric, the goal is to find the optimal controller that drives all trajectories, with a given bound $\mu_0$ on the disturbance set, to satisfy the specification $\psi$, i.e., $\xi_x(x, \pi_\alpha,\mu_0) \subseteq \psi$ while minimizing the effort metric $h_\psi(x, \mu_0)$ which corresponds to the bound on the input. The same development is applied to the open-loop system considering an open-loop input sequence $\uu=(u_0,u_1,\dots,u_{N-1})$ as defined in Section~\ref{Sec:specfication} instead of a policy $\pi_\alpha$.

In the following section, we provide a closed-form expression for computing the resilience–effort metric. We then present corollaries for the particular cases of the resilience and effort metrics and analyze the computational complexity.

%\textcolor{blue}{This goal is translated to a robust optimization problem defined by the resilience metric $g_\psi$ ( see Definition~\ref{def:controlled_resilience}). In the first case, we consider a linear dynamical system and a linear controller, aiming to find the exact solution for the problem. In the second case, we work with a nonlinear system, a nonlinear controller, and the specification $\psi$, seeking to synthesize the optimal controller under maximum disturbance. We leverage techniques from scenario optimization to ensure probabilistic guarantees for the solution.}

\section{Trade-off between resilience and effort}
\label{sec:3}
In this section, we address the problem of computing the resilience-effort metric for a weighted multi-objective cost function, along with the controller that achieves this optimal trade-off. This problem falls within the framework of multi-objective optimization (or Pareto optimization) \cite{chinchuluun2008pareto}, where the primary goal is to characterize the trade-off point between effort and resilience for a given set of preference weights. We will first present exact results for linear systems, then extend to nonlinear systems.

% Following the methodology of the previous section, we first present an exact computational solution for the case of time-varying linear systems with linear controllers. Subsequently, we extend our approach to nonlinear systems and nonlinear controllers using scenario optimization.
\subsection{Time-varying linear systems and closed-loop linear controllers}
Consider the system $\Sigma$ in~\eqref{eqn:non linear controlled system} with a linear time-varying (LTV) dynamics
\begin{equation}
\begin{aligned}
x(k+1) 
&= A_kx(k) + B_ku(k) + d(k), 
\end{aligned}
\label{eqn:time varying linear sys}
\end{equation}
where $A_k \in  \R^{n \times n}$ and $B_k \in \R^{n \times m}$ are the state and input matrices of the dynamics at time $k \geq 0$. In this section, we consider a closed-loop controller defined by \( \pi_\alpha(x(k)) =  u(k) = \alpha_1 x(k) + \alpha_2 \), where $(\alpha_1, \alpha_2) \in \mathbb{R}^{m\times n} \times \mathbb{R}^{m}$. We focus on specifications defined by \begin{equation}
\label{polytopic_spec}
\psi = \Gamma_0 \times \Gamma_1 \times \dots \times \Gamma_N \subseteq X^{N+1},
\end{equation}
where for $k=0,1,\ldots,N$, and $\Gamma_k = \{x \mid G_kx \leq H_k\}$ is a polytopic set with $G_k \in \mathbb{R}^{q \times n}$ and $H_k \in \mathbb{R}^q$. Note that such specifications include exact time reachability, finite-horizon safety and the more general class of closed and convex LTL$_f$ specifications (for details on convex and closed specifications, please refer to Definitions 3.2 and 3.3, as well as Section 7.3 in \cite{saoud2024}). The substitution of the supremum/infimum operator will be adopted whenever we use this polytopic specification as per Remark~\ref{substitution min_inf}. The next theorem is one of the main contributions of this paper, we show that the computation of the trade-off metric with input constraints can be translated into a standard optimization problem.

\begin{theorem}
\label{Theorem:7}
Consider the LTV system $\Sigma$ in~\eqref{eqn:time varying linear sys} with a closed-loop controller $\pi_\alpha(x(k)) = \alpha_1 x(k) + \alpha_2$. Consider the specification $\psi$ as defined in~\eqref{polytopic_spec}. For given weights $w_1, w_2 \geq 0$, the Pareto-optimal trade-off between resilience and control effort is characterized by
\begin{equation}
\label{eq:pareto_optimization}
\begin{aligned}
p_\psi(x, w_1, w_2) = &\max_{\substack{\varepsilon \ge 0,\; \mu \ge 0,\; \alpha_1 \in \mathbb{R}^{m \times n},\\ \alpha_2 \in \mathbb{R}^{m},\; P \ge 0}}  \; w_1  \mu - w_2\varepsilon \\[4pt]
\text{s.t.} \quad & P A_b = \mu \,E^{cl}(\alpha_1), \\[2pt]
&  P B_b \le F^{cl}(x, \alpha_1, \alpha_2, \varepsilon), \\[2pt]
& P \ge 0,
\end{aligned}
\end{equation}
where matrices $A_b, B_b, E^{cl}(\cdot), F^{cl}(\cdot)$ are defined below as
\begin{equation}
\label{Ab_Bb_closed_1}
A_b = \begin{bmatrix} \mathbb{I}_{n(N+1)} \\ -\mathbb{I}_{n(N+1)} \end{bmatrix} , \;
B_b = \mathbf{1}_{2n(N+1)} ,
\end{equation}
\begin{equation}
\label{E_F_closed_1}
E^{cl}(\alpha_1) = \begin{bmatrix} E_x(\alpha_1) \\ E_u(\alpha_1) \end{bmatrix}, \quad
F^{cl}(x, \alpha_1, \alpha_2, \varepsilon) = \begin{bmatrix} F_x(x, \alpha_1, \alpha_2) \\ F_u(x, \alpha_1, \alpha_2, \varepsilon) \end{bmatrix}.
\end{equation}
The matrix blocks are specified as follows:
\begin{itemize}
    \item For the state-dependent part
\end{itemize}
\begin{equation}
\begin{aligned}
\label{Ex_Fx_closed_1}
E_x (\alpha_1)= \begin{bmatrix} E_{x0} \\ E_{x1} \\ \vdots \\ E_{xN} \end{bmatrix}, \;
F_x(x,\alpha_1, \alpha_2)= \begin{bmatrix} F_{x0} \\ F_{x1} \\ \vdots \\ F_{xN} \end{bmatrix} ,
% \in \mathbb{R}^{q \times n(N+1)}, 
\\
E_{xk} = G_k E_k, \quad
F_{xk} = H_k - G_k F_{k}(x, \alpha_1, \alpha_2),
\end{aligned}
\end{equation}
with $E_k = \begin{bmatrix} 0_{n},\underbrace{\underline{A}_{k,0} , \underline{A}_{k,1},  \cdots\underline{A}_{k,k-1}}_{k \, \text{ times}} , \underbrace{0_{ n} , \cdots , 0_{ n}}_{N-k \, \text{ times}} \end{bmatrix}$, $ F_{k}(x, \alpha_1, \alpha_2) =  \underline{A}_{k,-1} x +  \sum_{i=0}^{k-1}  \underline{A}_{k,i}  B_i\alpha_2$, $\underline{A}_{k,i} = \prod_{j=i+1}^{k-1}\bar{A}_j,$ $\bar{A}_k = A_k + B_k\alpha_1,$ and the convention that an empty product equals the identity, i.e., $ \underline{A}_{k,k-1}=A_{1,0}=I_n$ and $\underline{A}_{0,i}=0_n$.
\begin{itemize}
 \item For the input-dependent part
\end{itemize}
\begin{equation}
\label{E_u, F_u closed_1}
\begin{aligned}
E_u(\alpha_1) = \begin{bmatrix} E_{u0} \\ E_{u1} \\ \vdots \\ E_{u(N-1)} \end{bmatrix}, \hspace{0.1cm}
F_u(x, \alpha_1, \alpha_2, \varepsilon) = \begin{bmatrix} F_{u0} \\  F_{u1} \\ \vdots \\ F_{u(N-1)} \end{bmatrix}, \\
E_{uk} = A_u \alpha_1 E_k, \; 
F_{uk} = \varepsilon B_u - A_u S_k(x, \alpha_1, \alpha_2) \in \mathbb{R}^{2m},
\end{aligned}
\end{equation}
with the convention $E_{u0} = 0_{2m \times n(N+1)}$.
Finally, the auxiliary matrices $A_u$, $B_u$ and $S_k$ are given by
\begin{equation}
\begin{aligned}  
\label{A_u, b_u closed_1}
&A_u = \begin{bmatrix} \mathbb{I}_m \\ -\mathbb{I}_m \end{bmatrix} \in \mathbb{R}^{2m \times m}, \quad
B_u = \mathbf{1}_{2m} \in \mathbb{R}^{2m},\\
\end{aligned}
\end{equation}
\begin{equation}
\begin{aligned}  
\label{s_k}
&S_k(x, \alpha_1, \alpha_2) = \alpha_1 F_{k}(x,\alpha_1, \alpha_2)  + \alpha_2 \in \R^m.
\end{aligned}
\end{equation}
\end{theorem}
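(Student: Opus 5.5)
The plan is to rewrite the trajectory inclusion $\xi(x,\pi_\alpha,\mu)\subseteq\psi\times\Omega_\varepsilon(0)^N$ as a family of linear inequalities in the stacked disturbance, required to hold for every disturbance in a box. Farkas' lemma, in its affine form (equivalently, LP duality), then converts this robust constraint into the existence of a nonnegative multiplier matrix $P$. The proof has three steps.

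\textbf{Step 1: explicit trajectories.} Stack the disturbance as $\mathbf{d}=(d(-1),d(0),\dots,d(N-1))\in\R^{n(N+1)}$, where the leading block $d(-1)$ is a dummy block that matches the first zero column of $E_k$. Under $u(k)=\alpha_1x(k)+\alpha_2$ the closed loop is $x(k+1)=\bar A_k x(k)+B_k\alpha_2+d(k)$. Unrolling by induction on $k$ gives
\[
x(k)=F_k(x,\alpha_1,\alpha_2)+E_k\mathbf{d},
\]
where $\underline{A}_{k,i}$ is the transition product from time $i+1$ to $k$. Consequently $u(k)=\alpha_1F_k+\alpha_2+\alpha_1E_k\mathbf{d}=S_k+\alpha_1E_k\mathbf{d}$.

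\textbf{Step 2: stacked robust constraint.} The state requirement $G_kx(k)\le H_k$ becomes $E_{xk}\mathbf{d}\le F_{xk}$. The input bound $\|u(k)\|_\infty\le\varepsilon$ is $A_uu(k)\le\varepsilon B_u$, which becomes $E_{uk}\mathbf{d}\le F_{uk}$. At $k=0$ the input $u(0)$ does not depend on $\mathbf{d}$, which is consistent with $E_{u0}=0$. Writing $\mathbf{d}=\mu\mathbf{z}$ with $\mathbf{z}$ in the unit box $\{\mathbf{z}: A_b\mathbf{z}\le B_b\}$, feasibility of $(\mu,\varepsilon,\alpha)$ for the sup in Definition~\ref{def:r_e_metric_unified} is equivalent to
\[
\mu E^{cl}(\alpha_1)\mathbf{z}\le F^{cl}(x,\alpha_1,\alpha_2,\varepsilon)\quad\text{for all } \mathbf{z} \text{ with } A_b\mathbf{z}\le B_b .
\]
The dummy block $d(-1)$ carries zero coefficients, so including it changes nothing.

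\textbf{Step 3: Farkas step (main obstacle).} Apply the affine Farkas lemma to the nonempty polytope $\{A_b\mathbf{z}\le B_b\}$, one row $c^T\mathbf{z}\le f$ at a time. The implication holds if and only if there is $p\ge0$ with $p^TA_b=c^T$ and $p^TB_b\le f$. Stacking the multipliers $p$ as the rows of $P$ gives exactly $PA_b=\mu E^{cl}$, $PB_b\le F^{cl}$, $P\ge0$. The delicate points are to check both directions, to verify nonemptiness of the box (true, since it contains $\mathbf{z}=0$), and to handle $\mu=0$. When $\mu=0$, take $P=0$; the constraints then reduce to the nominal feasibility conditions $F^{cl}\ge0$.

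Finally, the case split in Definition~\ref{def:r_e_metric_unified} is handled as follows. If no $\alpha$ makes the nominal trajectory satisfy $\psi$, the program has no feasible point, consistent with the convention that $p_\psi=0$. Otherwise, the sup becomes a max by Remark~\ref{substitution min_inf}, because the specification is closed. This yields \eqref{eq:pareto_optimization}. The resulting program is bilinear, since it involves $\mu E^{cl}(\alpha_1)$ and polynomial dependence on $\alpha_1$, but that affects only its tractability, not the equivalence established here.
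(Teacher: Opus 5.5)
Your proposal is correct and follows the paper's proof essentially step for step. You unroll the closed loop to get $x(k)=F_k+E_k\mathbf{d}$ and $u(k)=S_k+\alpha_1E_k\mathbf{d}$, stack all state and input constraints into $\mu E^{cl}(\alpha_1)\mathbf{z}\le F^{cl}$ for every $\mathbf{z}$ in the unit box $A_b\mathbf{z}\le B_b$, and convert this via the affine Farkas lemma into $PA_b=\mu E^{cl}$, $PB_b\le F^{cl}$, $P\ge 0$.

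Your parametrization $\mathbf{d}=\mu\mathbf{z}$ is slightly cleaner than the paper's $Y=\tfrac{1}{\mu}[0,d_0,\dots,d_{N-1}]^T$, because it covers $\mu=0$ without dividing by $\mu$. Like the paper, you rely on Remark~\ref{substitution min_inf} for attainment of the max. You also need a convention in the infeasible case, where the program's value would otherwise be $-\infty$ rather than $0$.
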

\begin{proof}[Proof Sketch]
The intuition behind the proof is to consolidate all state and input constraints over the finite horizon into a single, unified matrix inequality of the form $E^{cl} Y \leq F^{cl}$, where $Y = [d_0, \dots, d_{N-1}]^\top$ is the stacked disturbance vector. Once in this form, the robust constraint ``$E^{cl} Y \leq F^{cl}$ for all $Y \in \Omega_{\mu}^N(0)$'' can be transformed via Farkas' lemma~\cite{Alexander1999} (see Lemma~\ref{theorem farkas} in Appendix) into a tractable program. The full proof can be found in the appendix.
% The proof proceeds in three systematic steps. First, using the system dynamics, we express all state constraints $G_k x(k) \leq H_k$, $k = 0,\dots,N$, as a single matrix inequality \(E_x Y \leq F_x,\) where $E_x$ and $F_x$ depend only on the initial state, the system matrices, and the controller parameters. Similarly, we express each input constraint $u(k) \in \Omega_{\varepsilon}(0)$ (or, in feedback form, $\alpha_1 x(k) + \alpha_2 \in \Omega_{\varepsilon}(0)$) as a matrix inequality. By stacking these inequalities over $k = 0,\dots,N-1$, we obtain
% \(E_u Y \leq F_u,\)
% where $E_u$ and $F_u$ capture the dependence on the controller and the effort bound $\varepsilon$. Vertically concatenating the state and input constraints yields the compact form
% \[
%     E^{cl} Y \leq F^{cl}, \qquad E^{cl} = \begin{bmatrix} E_x \\ E_u \end{bmatrix}, \; F^{cl} = \begin{bmatrix} F_x \\ F_u \end{bmatrix}.
%     \]
% The requirement that this inequality hold for every $Y \in \Omega_{\mu}^N(0)$ is then equivalent, via Farkas' lemma, to the existence of a nonnegative matrix $P$ satisfying 
% \(P A_b = \mu E^{cl}, \qquad P B_b \leq F^{cl},\)
% where $A_b$ and $B_b$ are constant matrices that encode the geometry of the disturbance set $\Omega_{\mu}(0)$. This stacked form enables the application of Farkas' lemma, thus transforming robust feasibility constraints, which involve infinitely many constraints, into finite feasibility constraints. 
\end{proof}

\subsection{Time-varying linear systems and open-loop controllers}
The next theorem provides the approach for computing the resilience–effort trade-off for LTV system $\Sigma$ in~\eqref{eqn:time varying linear sys}, polytopic specifications in~\eqref{polytopic_spec}, and open-loop controllers. 

\begin{theorem}
\label{Theorem:8}
Consider the linear system $\Sigma$ in~\eqref{eqn:time varying linear sys} with an open-loop controller defined by a sequence of inputs $\uu = (u_0,u_1,\dots, u_{N-1}) \in U^N$ and the specification $\psi$ as in~\eqref{polytopic_spec}. For given weights $w_1, w_2 \geq 0$, the Pareto-optimal trade-off between resilience and control effort is characterized by
\begin{equation}
\label{eq:pareto_open_loop}
\begin{aligned}
p_\psi(x, w_1, w_2) = &\max_{\substack{\mu \ge 0,\; \varepsilon \ge 0,\; \\ \mathbf{\tilde{u}} \in \Omega_{1}(0)^N,\; P \geq 0}} \; w_1\mu - w_2 \varepsilon \\
\text{s.t.} & \quad P A_b = \mu  \,E^{ol} ,\\
& \quad P B_b \leq F^{ol}(x, \varepsilon, \mathbf{\tilde{u}}), \\
& \quad P \geq 0,
\end{aligned}
\end{equation}
where the matrices $A_b$ and $B_b$ are defined in Theorem~\ref{Theorem:7}, $\uu = \varepsilon \tilde{\uu}$ and the matrices $E^{ol}$ and $F^{ol}(.)$ are defined below.
\begin{equation}
\begin{aligned}
\label{E_F_open_loop_1}
&E^{ol} = \begin{bmatrix} E_{0} \\  E_{1} \\\vdots \\ E_{N} \end{bmatrix}, \hspace{0.1cm}
 F^{ol}(x, \uuu, \varepsilon) = \begin{bmatrix} F_{0} \\ F_{1} \\ \vdots \\ F_{N} \end{bmatrix},  
% \in \mathbb{R}^{qN \times nN},
\\
&
F_{k} = H_k - G_k  \underline{A}_{k,-1} x -\varepsilon G_k  \sum_{i=0}^{k-1}\underline{A}_{k,i} B_i\tilde{u}(i),
% \in \R^{q},
\end{aligned}
\end{equation}
Where $E_k = G_k \begin{bmatrix} 0_{n} ,\underbrace{ \tilde{A}_{k,0} ,  \cdots ,\tilde{A}_{k,k-1}}_{k \text{ times}}, \underbrace{0_{n} , \cdots , 0_{n}}_{N -k \text{ times}} \end{bmatrix}$, $\tilde{A}_{k,i} = \prod_{j=i+1}^{k-1}A_j$, with the convention that an empty product equals the identity, i.e., $ \tilde{A}_{k,k-1}=A_{1,0}=I_n$ and $\tilde{A}_{0,i}=0_n$.
 % $P \in \R^{q(N+1) \times 2n(N+1)}$
\end{theorem}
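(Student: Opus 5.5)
The argument follows the template of Theorem~\ref{Theorem:7}, but is simpler: with an open-loop input the disturbance no longer feeds into the input, so only the state constraints depend on $Y$. There are three steps.

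\textbf{Step 1: unroll the dynamics.} Under $\uu=(u_0,\dots,u_{N-1})$, system~\eqref{eqn:time varying linear sys} gives
\[
x(k)=\tilde{A}_{k,-1}x+\sum_{i=0}^{k-1}\tilde{A}_{k,i}\bigl(B_iu_i+d_i\bigr).
\]
Writing $Y=[d_0,\dots,d_{N-1}]^\top$, padded with zero blocks as in $E_k$, the constraint $G_kx(k)\le H_k$ becomes $E_kY\le F_k$. Here the nominal term uses $\underline{A}_{k,i}=\tilde{A}_{k,i}$, since $\alpha_1=0$ in the open-loop case.

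\textbf{Step 2: handle the input bound.} The input constraint $\uu\in\Omega_{\varepsilon}(0)^N$ does not depend on $Y$. Substituting $\uu=\varepsilon\tilde{\uu}$ with $\tilde{\uu}\in\Omega_1(0)^N$ is exact whenever $\varepsilon>0$. The boundary case $\varepsilon=0$ forces $\uu=0$, and this is still captured, because the term $\varepsilon\tilde{\uu}$ then vanishes. So the input bound is absorbed entirely into the decision variable $\tilde{\uu}$, and no $E_u$ block is needed.

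\textbf{Step 3: stack and apply Farkas' lemma.} Stacking over $k=0,\dots,N$, the condition $\xi(x,\uu,\mu)\subseteq\psi\times\Omega_\varepsilon(0)^N$ is equivalent to
\[
E^{ol}Y\le F^{ol}(x,\varepsilon,\tilde{\uu})\quad\text{for all }Y\text{ with }\|Y\|_\infty\le\mu .
\]
For $\mu>0$, write $Y=\mu Z$ with $A_bZ\le B_b$, so the condition reads $\mu E^{ol}Z\le F^{ol}$ for all such $Z$. The affine form of Farkas' lemma (Lemma~\ref{theorem farkas}) applies because $\{Z: A_bZ\le B_b\}$ is a nonempty polytope. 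It says this holds iff there is $P\ge0$ with $PA_b=\mu E^{ol}$ and $PB_b\le F^{ol}$. For $\mu=0$ we may take $P=0$, and the condition reduces to the nominal constraint $0\le F^{ol}$.

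\textbf{Conclusion.} Every triple $(\mu,\varepsilon,\uu)$ admissible in the definition of $p_\psi$ for open-loop controllers therefore corresponds to a feasible point of~\eqref{eq:pareto_open_loop} with the same objective value $w_1\mu-w_2\varepsilon$, and conversely. This gives equality of the two suprema. The supremum is attained, so it can be written as a maximum, by Remark~\ref{substitution min_inf}, since $\psi$ is closed. The degenerate branch, in which no nominal trajectory satisfies $\psi$, is excluded by the same reasoning as in Theorem~\ref{Theorem:7}.

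\textbf{Expected obstacle.} The hard part is bookkeeping rather than conceptual. One must check the index conventions in $E_k$: the leading zero block, the empty product $\tilde{A}_{k,k-1}=\mathbb{I}_n$, and $E_0=0$. One must also verify the Farkas equivalence carefully in its affine, inequality-system form, including the $\mu=0$ case.
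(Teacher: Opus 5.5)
Your proposal is correct and follows the paper's proof essentially step for step: you unroll the LTV dynamics, normalize the input as $\uu=\varepsilon\tilde{\uu}$, stack the state constraints into $E^{ol}Y\le F^{ol}$, rescale the disturbance to the unit box $A_bZ\le B_b$, and apply the affine Farkas lemma to get $PA_b=\mu E^{ol}$, $PB_b\le F^{ol}$. Three of your choices are more careful than the paper's write-up: you treat $\mu=0$ and $\varepsilon=0$ explicitly where the paper silently divides by $\mu$ and $\varepsilon$, you read $\underline{A}_{k,i}$ as $\tilde{A}_{k,i}$ in $F_k$, and you take the nonemptiness of $\{Z: A_bZ\le B_b\}$ as the Farkas hypothesis.
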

\begin{proof}[Proof Sketch]
The proof follows similar steps as that of Theorem~\ref{theorem:1} and can be found in the appendix.
% The intuition is to consolidate all state constraints over the finite horizon into a single matrix inequality of the form \(E^{\text{ol}} Y \leq \tfrac{1}{\mu}F^{\text{ol}}\), where \(Y = \tfrac{1}{\mu}[0, d_0, \dots, d_{N-1}]^\top\) is the scaled disturbance. This allows the robust constraint ``\(E^{\text{ol}} Y \leq \tfrac{1}{\mu}F^{\text{ol}}\) for all \(Y\) such that \(A_b Y \leq B_b\)'' to be reformulated via Farkas' lemma.

% The proof proceeds in three steps. First, using the state evolution, all constraints \(G_k x(k) \leq H_k\), \(k=0,\dots,N\), are written as
% \begin{equation}
% E^{\text{ol}} Y \leq \tfrac{1}{\mu}F^{\text{ol}}(x, \varepsilon, \tilde{\mathbf{u}}).
% \end{equation}
% Second, \(A_b Y \le B_b\) encodes \(\|Y\|_\infty \le 1\). Multiplying by \(\mu\) gives
% \begin{equation}
% \mu E^{\text{ol}} Y \leq F^{\text{ol}}(x, \varepsilon, \tilde{\mathbf{u}}), \quad \forall Y:\; A_b Y \leq B_b.
% \end{equation}
% By the affine Farkas lemma, this holds iff there exists \(P \ge 0\) such that
% \begin{equation}
% P A_b = \mu E^{\text{ol}}, \qquad P B_b \leq F^{\text{ol}}(x, \varepsilon, \tilde{\mathbf{u}}),
% \end{equation}
% which leads directly to \eqref{eq:pareto_open_loop}.
\end{proof}

In this subsection, we transformed the original uncertain problem of the Pareto optimization into a tractable optimization problem even when both $\mu$ and $\varepsilon$ are decision variables. 
For the linear closed-loop controller, the constraints of the optimization problem are linear in $\varepsilon$ and polynomial in $\alpha_1$ and $\alpha_2$ and $\mu$. For the open-loop controller the constraints of the optimization problem are linear in $\mu$ and bilinear in $\varepsilon$ and $\uuu$. In the next subsection, we extend the computation of the resilience-effort metric to the case of nonlinear systems.

\subsection{Nonlinear systems and nonlinear controllers}
\label{scenario_optimization_steps}
This section presents a scenario-based approach to compute the resilience-effort metric for nonlinear systems, nonlinear closed-loop controllers, and general specifications. With this approach, we can handle state constraints defined by the specification, as well as input constraints. Although the scenario approach provides only an approximate solution to the robust optimization problem, it ensures computational feasibility for practical implementation. Moreover, it also allows providing probabilistic guarantees on the obtained result \cite{garatti2024}. We first formulate the calculation of the resilience metric as a robust optimization problem and then translate it to a scenario optimization problem using disturbance normalization step.
 
\subsubsection{Robust Optimization}
Consider the system defined in~\eqref{eqn:non linear controlled system}, with a constrained input set given by $U  \subseteq \R^m$. To properly state the scenario optimization problem, let us define the robust optimization problem for a given $x \in X$, a specification $\psi \subseteq X^{N+1}$, and weights \(w_1, w_2 \geq 0\) as
\begin{equation}
\label{eqn:scenario_pb_rew_ro_re}
\begin{aligned}
\sup_{(\mu, \varepsilon, \alpha) \in \mathbb{R}_{\geq 0} \times \mathbb{R}_{\geq 0} \times \mathbb{R}^d} \; & w_1 \mu - w_2 \varepsilon \\
\text{s.t. } \; & (\mu, \varepsilon, \alpha)  \in \bigcap_{\mathbf{d} \in \Omega_{\mu}(0)^N}  \mathcal{X}_{\mathbf{d}},
\end{aligned}
\end{equation}
where the constraint set $\mathcal{X}_{\mathbf{d}}$ is given by
\begin{equation}
\label{eqn:cont_constr_trade}
\begin{aligned}
\mathcal{X}_{\mathbf{d}} = \big\{ &(\mu,\varepsilon, \alpha) \in \mathbb{R}_{\ge 0} \times \mathbb{R}_{\ge 0} \times\R^d  \mid  \nonumber  \mathbf{d} = (d_0, \ldots, d_{N-1}) \in \Omega_{\mu}(0)^N \nonumber\\
& \xi(x,\pi_\alpha, \mu) \subseteq \psi \times U^N   \big\}.
\end{aligned}
\end{equation}
One way to solve this problem is to consider all the possible values of $\mathbf{d} = (d_0, d_1, \dots, d_N ) \in \Omega_{\mu}(0)^N$ and look for a solution. This is not possible because of the infinite realization possibilities of $\mathbf{d}$. To address this, one approach is to randomly sample $\mathbf{d}$ and solve the problem for the samples. This approach is called the scenario approach \cite{garatti2024}.
\subsubsection{Scenario Optimization}

To formulate the scenario optimization problem, we define the normalized disturbance set 
$\mathcal{D} = \{ \boldsymbol{\delta} = (\delta_0, \dots, \delta_{N-1}) \in \Omega_1(0)^N\}$, where $\mathbf{d} = \mu \boldsymbol{\delta}$ represents the actual disturbance in 
$\Omega_\mu(0)^N$. 

% For a given $N \in \mathbb{N}$, the state evolution under a fixed $\boldsymbol{\delta} \in \mathcal{D}$ follows the recursive relation $x_{\boldsymbol{\delta}}(N) = f\big(f(\dots f(x, \pi_\alpha(x)) + \mu \delta_0, \dots\big) + \mu \delta_{N-1}.$
% The corresponding trajectory $\xi(x, \pi_\alpha, \mu \boldsymbol{\delta})$ is the sequence of states and controls given by
% $$
% \begin{aligned}
% \label{trajectory}
%     &\xi(x, \pi_\alpha, \mu \boldsymbol{\delta}) = \big\{\big(x_{\boldsymbol{\delta}}(0), u(0)\big), \big(x_{\boldsymbol{\delta}}(1),u(1)\big), \ldots, x_{\boldsymbol{\delta}}(N))  \mid \\
%     &x(0) = x, u(k) = \pi_\alpha(x_{\boldsymbol{\delta}}(k)),
%     x_{\boldsymbol{\delta}}(k+1) = f(x_\delta(k) , u(k)) +\mu\delta_k  \big\}.
% \end{aligned}
% $$

For a state $x \in X$ and a disturbance trajectory $\boldsymbol{\delta} \in \mathcal{D}$, we define the set of constraints as follows:
\begin{align*}
\mathcal{X}_{\boldsymbol{\delta}} = \{ (\mu, \varepsilon, \alpha)\in \mathbb{R}_{\geq 0} \times \mathbb{R}_{\geq 0} \times\mathbb{R}^d \; \mid\\ 
\; \xi(x, \pi_\alpha, \mu \boldsymbol{\delta}) \subseteq \psi \times \Omega_\varepsilon(0) \}.
\end{align*}
This ensures the disturbed trajectory meets the specification $\psi$ while respecting input constraints $U$.

For \(i \in \{1, \dots, M\}\), we consider \(M\) scenarios, \(\boldsymbol{\delta}_i = (\delta_0^i, \dots, \delta_{N-1}^i)\), which are independent and identically distributed (i.i.d) from the probability space \((\mathcal{D}, \mathcal{F}, \mathbb{P})\), where $\mathcal{F}$ is the Borel $\sigma$-algebra on $\mathcal{D}$ and \(\mathbb{P}\) is any probability measure and \(M \in \mathbb{N}\). With these concepts in hand, the scenario optimization problem is formulated as
\begin{equation}
\label{eqn:scenario_pb_rew_re}
\begin{aligned}
\sup_{(\mu, \varepsilon, \alpha) \in \mathbb{R}_{\geq 0} \times \mathbb{R}_{\geq 0} \times \mathbb{R}^d} \; & w_1 \mu - w_2 \varepsilon \\
\text{s.t. } \; & (\mu, \varepsilon, \alpha) \in \bigcap_{i=1}^{M} \mathcal{X}_{\boldsymbol{\delta}_i},
\end{aligned}
\end{equation}
This nonconvex optimization problem seeks the optimal solution denoted by $\theta_M^* = (\mu_M^*, \varepsilon_M^*, \alpha_M^*) $ that is feasible for all $M$ scenarios and can be solved with known methods such as sequential quadratic programming or interior-point techniques \cite{NumericalOp}. Furthermore, we rely on the results in \cite{garatti2024} to provide a probabilistic guarantee for the generalization of the solution $\theta_M^*$ to unseen constraint scenarios. We will formulate this guarantee in the rest of this subsection. 

For a fixed number of scenarios $M$, we assume that a feasible and locally optimal solution $\theta_M^*  \in \mathbb{R}_{\ge 0} \times \mathbb{R}_{\ge 0} \times \mathbb{R}^d$ for problem~\eqref{eqn:scenario_pb_rew_re} is available and define the violation probability or the risk of the solution as $V(\theta_M^*)=\mathbb{P}\left\{\boldsymbol{\delta} \in \mathcal{D}: \theta_M^* \notin \mathcal{X}_\delta\right\}$. This metric measures the generalization power of the optimal solution to unseen constraints. 
{It provides the probability that a new constraint will not be satisfied by the solution $\theta_M^*$, indicating a violation of the considered solution. This indicates the probability with which the obtained solution can be generalized to the set of uncertainties $\mathcal{X}_{\mathbf{d}}$ for any ${\mathbf{d}}  \in \Omega_{\mu}(0)^N$. In the perfect case where $V(\theta_M^*) = 0$, the solution $\theta_M^*$ remains valid for any scenario constraint in the continuous space $\mathcal{X}_\mathbf{d}$ for any ${\mathbf{d}}  \in \Omega_{\mu}(0)^N$.
A constraint in the scenario program~\eqref{eqn:scenario_pb_rew_re} is called support constraint if its removal modifies the optimal solution $\theta_M^*$, and the complexity $s_M^*$ of $\theta_M^*$ is the number of such support constraints. For more details on the metric $s_M^*$, called complexity, please refer to \cite{RiskComplexity}.

% it is essential to ensure that the solver used delivers repeatable results which requires using deterministic solvers. 

We define the discrete function $b(k)$ for $k=$ $0,1, \ldots, M$ as
\begin{equation}
\label{eqn:epsilon}
b(k):=1-t(k), \quad k=0,1, \ldots, M-1, \text { and } b(M)=1,
\end{equation}
where $t(k)$ is the unique solution of the following polynomial equation for a chosen confidence parameter $\beta \in (0, 1]$:
\begin{equation}
\label{risk_bound}
\frac{\beta}{M} \sum_{m=k}^{M-1}\binom{m}{k} t^{m-k}-\binom{M}{k} t^{M-k}=0.
\end{equation}
This metric establishes the bound on the violation probability. We can now present the main theorem in this section.
\begin{theorem}
\label{theorem:9}
Consider the system $\Sigma$ in~\eqref{eqn:non linear controlled system}. For $x \in X$, a specification $\psi \subseteq X^{N+1}$ as defined in \ref{Sec:specfication}, and an input set $U \subseteq \R^m$, let $\theta^*_M$ be the solution to the optimization problem defined in~\eqref{eqn:scenario_pb_rew_re} for $M \in \mathbb{N}$ number of scenarios. For any probability measure $\mathbb{P}$, for any confidence $\beta \in (0, 1]$ and with $b(k)$, $ k = 0, 1, \dots, M$ as given in~\eqref{eqn:epsilon}, it holds that
\[
\mathbb{P}^M \big(V(\theta_M^*) \le b(s_M^*)\big) \ge 1-\beta
\]
% where \( s_M^* \) is called the complexity of the solution and refers to the number of scenarios \( \delta_i \) for \( i \in \{1, \dots, M\} \) that alter the solution \( \theta_M^* \) if removed.
\end{theorem}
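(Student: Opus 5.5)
The plan is to reduce Theorem~\ref{theorem:9} directly to the general risk–complexity result for nonconvex scenario programs in \cite{garatti2024}. That result considers a decision variable $\theta$ in a Euclidean space and i.i.d.\ uncertain constraints $\theta\in\mathcal{X}_{\boldsymbol{\delta}_i}$. It covers any decision map that satisfies suitable consistency properties; it is stated as $\mathbb{P}^M\{V(\theta_M^*)\le b(s_M^*)\}\ge 1-\beta$, with $b$ given exactly by \eqref{eqn:epsilon}--\eqref{risk_bound}. The proof therefore consists of casting \eqref{eqn:scenario_pb_rew_re} into that framework and checking its hypotheses. No convexity is required, which is essential here because $\mathcal{X}_{\boldsymbol{\delta}}$ depends nonlinearly on $\alpha$ and $\mu$ through the closed-loop dynamics.

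The first step is the reformulation. Set $\theta=(\mu,\varepsilon,\alpha)\in\Theta:=\mathbb{R}_{\ge0}\times\mathbb{R}_{\ge0}\times\mathbb{R}^d$. After the normalization $\mathbf{d}=\mu\boldsymbol{\delta}$, the uncertainty $\boldsymbol{\delta}$ ranges over the fixed set $\mathcal{D}=\Omega_1(0)^N$, which does not depend on the decision variable. This is exactly why the normalization was introduced: the sampling distribution $\mathbb{P}$ on $(\mathcal{D},\mathcal{F})$ is then independent of $\theta$, as the framework requires. I will also verify measurability. For each fixed $\theta$, the trajectory $\xi(x,\pi_\alpha,\mu\boldsymbol{\delta})$ is a continuous function of $\boldsymbol{\delta}$, since it is a finite composition of $f$, $\pi_\alpha$ and additions. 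Hence $\{\boldsymbol{\delta}:\theta\notin\mathcal{X}_{\boldsymbol{\delta}}\}$ is a Borel set whenever $\psi$ and $\Omega_\varepsilon(0)$ are Borel, so $V(\theta)$ is well defined.

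The second step is to describe the algorithm $\mathcal{A}_M:(\boldsymbol{\delta}_1,\dots,\boldsymbol{\delta}_M)\mapsto\theta_M^*$ that returns the solution of \eqref{eqn:scenario_pb_rew_re}. The framework of \cite{garatti2024} requires this map to be a deterministic decision rule satisfying two conditions. The first is permutation invariance, which holds because the problem depends only on the set of sampled constraints. The second is consistency: the returned $\theta_M^*$ must belong to $\bigcap_i\mathcal{X}_{\boldsymbol{\delta}_i}$, and adding a constraint that $\theta_M^*$ already satisfies must not change the output. This is where a deterministic solver with a fixed initialization and a tie-break rule is needed. I will state it as a standing assumption on the (locally) optimal solution $\theta_M^*$. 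With it in place, $s_M^*$ is the cardinality of a minimal subsample that reproduces $\theta_M^*$, that is, the support constraints defined above.

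The final step is to invoke the theorem of \cite{garatti2024}. For any $\mathbb{P}$ and $\beta\in(0,1]$, it gives $\mathbb{P}^M\{V(\theta_M^*)>b(s_M^*)\}\le\beta$, where $t(k)$ is the unique root in $(0,1)$ of \eqref{risk_bound} and $b(M)=1$. Taking complements yields the claim.

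The main obstacle is the consistency requirement in the nonconvex setting. A generic local solver need not return a point that is unchanged when an already-satisfied constraint is added, and with local optima the notion of support constraint depends on the solver. I will handle this by explicitly restricting attention to a deterministic solver map satisfying the consistency property, as the remark before the theorem suggests. Beyond that, the proof is a direct verification of hypotheses.
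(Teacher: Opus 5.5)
Your proposal follows essentially the paper's route. You recast \eqref{eqn:scenario_pb_rew_re}, with its $\theta$-independent normalized disturbance, in the framework of \cite{garatti2024}, secure its Property~1, and invoke \cite[Theorem 6]{garatti2024}; your explicit consistency assumption on the (local) solver is a sharper version of what the paper gets from \cite[Section 3]{garatti2024} plus a deterministic, reproducibly initialized solver, since reproducibility alone is not consistency. One identification that you and the paper both take for granted is that $s_M^*$, the count of individually removable support constraints, equals the support-subsample complexity in \cite{garatti2024}; in nonconvex programs the former is only a lower bound on the latter, so $s_M^*$ must be read in the support-subsample sense for the cited bound to apply.
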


\begin{proof}
To derive formal guarantees from solving problems in~\eqref{eqn:scenario_pb_rew_re}, we must satisfy a requirement formalized in Property $1$ in \cite{garatti2024}. This property has been proven to hold for robust optimization problems in \cite[Section 3]{garatti2024}. 
Using a deterministic solver with consistent initialization ensures that the optimization procedure is reproducible, in the sense that the same problem instance defined by an identical set of scenarios and initial conditions will always lead to the same numerical solution. Since the solver follows a fixed computational path and does not rely on randomness, repeated executions yield identical results.
Therefore, for the optimization problem defined in~\eqref{eqn:scenario_pb_rew_re}, this reproducibility property is satisfied. Thus, using \cite[Theorem 6]{garatti2024} for a robust optimization problem provides the desired result. 
\end{proof}

\medskip

The information on the number of scenarios M sufficient to achieve a level of
confidence is given implicitly by Theorem~\ref{theorem:9}. We begin by selecting a desired confidence level, \(\beta \in (0, 1]\), which can be made arbitrarily close to 1. For a chosen number of scenarios \(M\), we solve the associated optimization problem and determine the corresponding complexity \(s_M^*\). Using this value, we compute the bound on the risk probability, denoted by \(b(s_M^*, M)\), by solving Equation~\eqref{risk_bound}.
The theory of the scenario approach guarantees that for any \(\beta\), the probability that the true violation probability of the solution exceeds \(b(s_M^*, M)\) is at most \(\beta\). Note that the complexity \(s_M^*\) itself depends on \(M\). For a fixed complexity \(k\), the bound \(b(k, M)\) is a decreasing function of \(M\). In practice, \(k = s_M^*\) is typically stable as \(M\) grows. This observation, combined with the theoretical behavior of \(b(k, M)\), implies that increasing \(M\) leads to a tighter risk bound \(b(s_M^*, M)\), allowing us to make this bound as small as desired.
Hence, one may choose the higher value of $M$ until reaching the desired violation probability bound level. This shows that any confidence parameter and bound can be achieved by an appropriate choice of the number of scenarios $M$. 

\begin{remark}
A key advantage of this method is that it is model-free: it requires only measurements of the system dynamics at samples of the normalized perturbation. Specifically, for a given normalized perturbation $\boldsymbol{\delta}_i \in \mathcal{D}$, we only need to observe the resulting state $x(k)$ for each time step $k = 1, 2, \dots, N$, without the explicit knowledge of the dynamics. This information is sufficient to formulate the optimization problem in \eqref{eqn:scenario_pb_rew_re} and, consequently, to compute an approximate resilience metric with probabilistic guarantees. 
\end{remark}
In the next section, we provide the approach to compute the resilience metric in both linear and nonlinear systems.

\section{Computation of Resilience} 
\label{sec:4}
In this section, we demonstrate how to compute exactly both the resilience metric and the associated optimal controller, for the general class of specifications defined in Section~\ref{Sec:specfication}, with linear time-varying control systems, considering linear closed-loop controllers and open-loop controllers. For linear closed-loop controllers, the computation of the resilience metric is formulated as a polynomial optimization problem, whereas for open-loop controllers, it reduces to a linear optimization problem. Furthermore, we show how the resilience metric can be approximated up to any accuracy for nonlinear systems and nonlinear controllers.

\subsection{Time-varying linear systems and linear controllers}
% \subsubsection{Closed-loop controllers}
Consider the system $\Sigma$ in~\eqref{eqn:non linear controlled system} with an LTV dynamics, as in~\eqref{eqn:time varying linear sys} and a linear controller defined by \( \pi_\alpha(x(k)) =  u(k) = \alpha_1 x(k) + \alpha_2 \), where $(\alpha_1, \alpha_2) \in \mathbb{R}^{m\times n} \times \mathbb{R}^{m}$. 

Given a polytopic specification $\psi$ as defined in Equation~\eqref{polytopic_spec}, the calculation of the resilience metric requires the computation of the optimal parameters $\alpha_1$ and $\alpha_2$ that ensure the system respects the specification while simultaneously maximizing the disturbance set. The next corollary provides a closed-form expression of the resilience metric such that the state evolution satisfies $\psi$. 
\begin{corollary}
\label{theorem:1}
Consider the linear system $\Sigma$ in~\eqref{eqn:time varying linear sys} with a closed-loop controller $\pi_\alpha(x(k)) = \alpha_1 x(k) + \alpha_2$, and the specification $\psi$ as in~\eqref{polytopic_spec}. We have
\begin{equation}
\label{eq:resilience closed loop}
\begin{aligned}
g_\psi(x, \varepsilon_0) = &\max_{\mu \ge 0, \,  \alpha_1 \in \mathbb{R}^{m \times n}, \alpha_2 \in \mathbb{R}^{m}, P \geq0} \; \mu \\
s.t  \quad& P A_b = \mu \,E(\alpha_1) ,\\
&   P B_b \leq F(x, \alpha_1, \alpha_2, \varepsilon_0), \\
& P \geq 0,
\end{aligned}
\end{equation}
where matrices $A_b, B_b, E^{cl}(\cdot), F^{cl}(\cdot)$ are defined in \eqref{Ab_Bb_closed_1} and \eqref{E_F_closed_1} respectively. 
\end{corollary}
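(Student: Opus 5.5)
Corollary~\ref{theorem:1} should follow as a specialization of Theorem~\ref{Theorem:7}: we fix the input bound $\varepsilon=\varepsilon_0$ instead of optimizing over it, and set the weights to $w_1=1$, $w_2=0$. With $\varepsilon$ fixed, the feasible set of \eqref{eq:pareto_optimization} becomes exactly the feasible set of \eqref{eq:resilience closed loop}, where $E=E^{cl}(\alpha_1)$ and $F=F^{cl}(x,\alpha_1,\alpha_2,\varepsilon_0)$. The objective reduces to $\mu$. It then remains to check that the maximum of this program equals $g_\psi(x,\varepsilon_0)$ as given in Definition~\ref{def:controlled_resilience}.

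To make this self-contained, I would reproduce the core equivalence used in the proof of Theorem~\ref{Theorem:7}, namely that for fixed $(\mu,\alpha_1,\alpha_2)$
\[
\xi(x,\pi_\alpha,\mu)\subseteq \psi\times\Omega_{\varepsilon_0}(0)^N \iff \exists P\ge 0:\ PA_b=\mu E^{cl}(\alpha_1),\ PB_b\le F^{cl}(x,\alpha_1,\alpha_2,\varepsilon_0).
\]
The argument has three steps.

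\textbf{Step 1: stack the disturbances.} Unroll the closed-loop dynamics $x(k+1)=\bar A_k x(k)+B_k\alpha_2+d(k)$ to get $x(k)=F_k(x,\alpha_1,\alpha_2)+E_kY$. Here $Y=(0,d_0,\dots,d_{N-1})\in\R^{n(N+1)}$ is the stacked disturbance, with a padded zero block matching the leading $0_n$ in $E_k$. Consequently $u(k)=\alpha_1x(k)+\alpha_2=S_k+\alpha_1E_kY$.

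\textbf{Step 2: write all constraints as one inequality.} The state constraints $G_kx(k)\le H_k$ become $E_{xk}Y\le F_{xk}$. The input constraints $\|u(k)\|_\infty\le\varepsilon_0$, written as $A_uu(k)\le\varepsilon_0B_u$, become $E_{uk}Y\le F_{uk}$. Stacking both gives $E^{cl}Y\le F^{cl}$.

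\textbf{Step 3: apply Farkas' lemma.} Write $Y=\mu Z$ with $A_bZ\le B_b$, which encodes $\|Z\|_\infty\le1$. The robust requirement is then that $\mu E^{cl}Z\le F^{cl}$ holds for all $Z$ in this nonempty box. By the affine Farkas lemma (Lemma~\ref{theorem farkas}), applied row by row and with the resulting multipliers stacked into $P$, this holds if and only if such a $P\ge0$ exists.

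Next comes the case split of Definition~\ref{def:controlled_resilience}. If some $\alpha$ makes the nominal trajectory satisfy $\psi\times\Omega_{\varepsilon_0}(0)^N$, then $\mu=0$ is feasible, taking $P$ built from nonnegative slacks with $PA_b=0$; one choice is $P=[\tfrac12 D,\tfrac12 D]$ with $D\ge0$ diagonal. The supremum over $\mu$ in \eqref{eq:resilience_closed} then coincides with the optimal value of the program. The supremum is attained because the specification is closed and polytopic (Remark~\ref{substitution min_inf}), which justifies writing $\max$. In the degenerate case where no such $\alpha$ exists, the program is infeasible even at $\mu=0$. Here I would adopt the convention that its value is $0$, matching the definition.

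The main obstacle I expect is the bookkeeping in Step~1. I need to verify carefully that the indexing conventions for $\underline A_{k,i}$, the padded zero block in $Y$, and $E_{u0}=0$ produce exactly $x(k)=F_k+E_kY$, since the empty-product conventions at $k=0,1$ are easy to get wrong. A lesser subtlety is that Farkas' lemma needs the box $\{Z: A_bZ\le B_b\}$ to be nonempty, which is trivially true, and the degenerate case $\mu=0$ must be handled separately; both are routine.
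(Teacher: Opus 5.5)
Your proposal is correct and follows the paper's route. The paper obtains the corollary in one line from Theorem~\ref{Theorem:7} with $w_1=1$, $w_2=0$, $\varepsilon=\varepsilon_0$; since $\varepsilon$ is a decision variable in that theorem, the precise content of this is what you do, namely rerunning its Farkas-based feasibility equivalence (unroll to $x(k)=F_k+E_kY$, stack state and input constraints into $E^{cl}Y\le F^{cl}$, dualize over the box $A_bZ\le B_b$) with $\varepsilon$ fixed at $\varepsilon_0$.

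Your handling of $\mu=0$ and of the infeasible branch of Definition~\ref{def:controlled_resilience} is more careful than the paper's; at $\mu=0$, $P=0$ already suffices, since nominal feasibility means $F^{cl}\ge 0$. One soft spot, shared with the paper's appeal to Remark~\ref{substitution min_inf}, is attainment of the maximum: closedness of $\psi$ alone does not give it. What does give it is that, for $\mu>0$, the bound $\varepsilon_0$ keeps the relevant controller parameters bounded along a maximizing sequence, so the maximum is attained whenever the value is finite (it can be $+\infty$ for unbounded $\Gamma_k$).
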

\begin{proof}
% From Definition~\ref{def:controlled_resilience}, the resilience metric for the closed-loop controller is
% \[
% \begin{aligned}
% g_\psi(x, \varepsilon_0) = &\max_{\mu \ge 0} \; \mu \\
% \text{s.t.} \;\; & \exists \alpha_1 \in \mathbb{R}^{m \times n}, \alpha_2 \in \mathbb{R}^{m},\\
% &\forall d(0), \dots, d(N-1) \in \Omega_{\mu}(0),\\
% &G_k x(k) \leq H_k, \hspace{0.5cm} \text{for } k = 0,\dots,N,\\
% & \alpha_1 x(j) + \alpha_2 \in \Omega_{\varepsilon_0}(0), \; \hspace{0.5cm} \text{for } j = 0,\dots,N-1.
% \end{aligned}
% \]
% To establish the equivalence, one follows the same steps of the proof of Theorem~\ref{Theorem:7} with $w_1 =  1$, $w_2 = 0$ and a fixed input bound $\varepsilon =  \varepsilon_0$.
By Definition~\ref{def:controlled_resilience}, the closed-loop resilience metric equals \eqref{eq:resilience closed loop} via Theorem~\ref{Theorem:7} with $w_1 = 1$, $w_2 = 0$, and $\varepsilon = \varepsilon_0$.
\end{proof}

\begin{remark}
Corollary~\ref{theorem:1} generalizes the results in \cite{ait2025maximal} in two significant directions. First, it takes into account input constraints and finds a new way to apply Farkas' lemma by grouping all state and input constraints in a single matrix inequality. This property maintains the equivalence when applying Farkas’ Lemma \cite{Alexander1999} and, importantly, does not increase the computational complexity. We preserve the problem structure, and the constraints of the resulting optimization problem remain linear in $\mu$ and polynomial in $\alpha_1$ and $\alpha_2$. 
Second, the proposed framework strictly generalizes existing results established for time-invariant linear systems in \cite{ait2025maximal}. These results arise as a special case of the present work and can be recovered directly by considering $A_k = A$ and $B_k = B$ for all $k \in \N$. In addition, we extended the results of \cite{saoud2024} from calculating the resilience for linear time-invariant autonomous systems to a wider class of time-varying linear control systems governed by linear state-feedback control under a more general class of specifications.
\end{remark}

% \begin{remark}
% This definition of resilience allows us to directly recover the formulation without input constraints from \cite{ait2025maximal} by either assigning a sufficiently large value to $\varepsilon_0$ or by simply removing the input constraints from the optimization problem. Notably, the resilience metric in the absence of input constraints serves as an upper bound for the resilience metric where these constraints are considered.
% \end{remark}

% \subsection{Linear systems with open-loop controllers}
% Consider the system $\Sigma$ in~\eqref{eqn:non linear controlled system} with a time varying linear dynamics, as in~\eqref{eqn:time varying linear sys}. 

The next corollary shows how to compute the resilience metric for the case of time-varying linear systems, polytopic specifications, and open-loop controller.
\begin{corollary}
\label{theorem:2}
Consider the linear system $\Sigma$ in~\eqref{eqn:time varying linear sys} with an open-loop controller defined by a sequence of inputs $\uu = (u_0,u_1,\dots, u_{N-1}) \in U^N$, and the specification~$\psi$ as in~\eqref{polytopic_spec}. We have
\begin{equation}
\label{eq:resilience open loop}
\begin{aligned}
g_\psi(x, \varepsilon_0) = &\max_{\mu \ge 0, \,  \mathbf{\tilde{u}} \in \Omega_{1}(0)^N, P \geq0} \; \mu \\
s.t  \quad & P A_b =\mu \; E^{ol} ,\\
& P B_b \leq F^{ol} (x, \mathbf{\tilde{u}} , \varepsilon_0), \\
& P \geq 0,
\end{aligned}
\end{equation}
where matrices $A_b, B_b, E^{ol} (\cdot), F^{ol} (\cdot)$ are defined in \eqref{Ab_Bb_closed_1} and \eqref{E_F_open_loop_1} respectively. 
\end{corollary}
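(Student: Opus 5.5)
The plan is to obtain Corollary~\ref{theorem:2} as a specialization of Theorem~\ref{Theorem:8}, in the same way Corollary~\ref{theorem:1} follows from Theorem~\ref{Theorem:7}. Set $w_1 = 1$, $w_2 = 0$ and freeze the input bound at $\varepsilon = \varepsilon_0$ rather than leaving it as a decision variable. The objective $w_1\mu - w_2\varepsilon$ then becomes $\mu$. It remains to show that the feasible set of \eqref{eq:pareto_open_loop} with $\varepsilon = \varepsilon_0$ fixed coincides with the set of $\mu$ appearing in the open-loop resilience definition \eqref{eq:resilience_open}.

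\textbf{Step 1 (reduction of the definition).} By Definition~\ref{def:controlled_resilience}, $g_\psi(x,\varepsilon_0)$ is the supremum of $\mu\ge0$ for which there is $\uu\in\Omega_{\varepsilon_0}(0)^N$ with $G_k x(k)\le H_k$ for all $k=0,\dots,N$ and all disturbances $d(k)\in\Omega_\mu(0)$. The input constraint is enforced directly by the domain of $\uu$, not as a robust constraint. So, unlike the closed-loop case, there is no $E_u,F_u$ block.

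\textbf{Step 2 (reparametrization).} Write $\uu = \varepsilon_0\tilde\uu$ with $\tilde\uu\in\Omega_1(0)^N$. This is a bijection onto $\Omega_{\varepsilon_0}(0)^N$ when $\varepsilon_0>0$. When $\varepsilon_0=0$ both sides force $\uu=0$, so the case is still consistent.

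\textbf{Step 3 (stacking the state constraints).} Unroll $x(k)=\tilde A_{k,-1}x+\sum_{i<k}\tilde A_{k,i}(B_iu_i+d_i)$. Stacking $G_kx(k)\le H_k$ over $k$ gives $E^{ol}Y\le F^{ol}(x,\tilde\uu,\varepsilon_0)$ for the stacked disturbance $Y$, where $E^{ol}$ and $F^{ol}$ are exactly as in \eqref{E_F_open_loop_1}.

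\textbf{Step 4 (Farkas' lemma).} Normalize $Y=\mu\delta$ with $A_b\delta\le B_b$. The robust inequality $\mu E^{ol}\delta\le F^{ol}$ for all such $\delta$ is equivalent, by the affine Farkas lemma (Lemma~\ref{theorem farkas}), to the existence of $P\ge0$ with $PA_b=\mu E^{ol}$ and $PB_b\le F^{ol}$. This holds because the polytope $\{A_b\delta\le B_b\}$ is nonempty and bounded. This equivalence is already established in the proof of Theorem~\ref{Theorem:8}, so it can be cited rather than redone.

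\textbf{Step 5 (degenerate case and attainment).} If no nominal trajectory satisfies $\psi$, then $\mu=0$ is infeasible in the program and the definition returns $0$. This needs a remark on the convention, namely reading an infeasible maximum as $0$. Replacing $\sup$ by $\max$ is justified by Remark~\ref{substitution min_inf}, since the specification is closed.

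The only delicate point is checking that fixing $\varepsilon=\varepsilon_0$ in Theorem~\ref{Theorem:8} really recovers \eqref{eq:resilience_open} rather than a relaxation. This hinges on Step~2 and on the $\varepsilon_0=0$ edge case; the rest is direct substitution.
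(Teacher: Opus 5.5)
Your proposal is correct and follows the paper's route. The paper also obtains the corollary by specializing Theorem~\ref{Theorem:8} to $w_1=1$, $w_2=0$, $\varepsilon=\varepsilon_0$, reusing the Farkas-based characterization of the feasible set $\mathcal{P}$ from that theorem's proof. Your added care is sound but does not change the argument: you note that the fixed-$\varepsilon$ feasible-set characterization, not the optimal value of Theorem~\ref{Theorem:8} with $\varepsilon$ free, is what gets specialized, and you flag the $\varepsilon_0=0$ and infeasible-program conventions.
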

\begin{proof}
% By Definition~\ref{def:controlled_resilience}, the resilience metric for the open-loop controller is
% \begin{equation}
% \begin{aligned}
% g_\psi(x, \varepsilon_0) = &\max_{\mu \ge 0} \; \mu \\
% \text{s.t.} & \quad \exists  u(0), \ldots, u(N-1) \in \Omega_{\varepsilon_0}(0), \\
% &\forall d(0), \dots, d(N-1) \in \Omega_{\mu}(0),\\
% &G_kx(k) \leq H_k, \text{ for } k  = 0, \dots,N.\\
% \end{aligned}
% \end{equation}
% To establish the equivalence we follow the same steps of the proof of Theorem~\ref{Theorem:8} with $w_1 =  1$, $w_2 = 0$ and a fixed input bound $\varepsilon =  \varepsilon_0$.
By Definition~\ref{def:controlled_resilience}, the open-loop resilience metric equals \eqref{eq:resilience open loop} via Theorem~\ref{Theorem:8} with $w_1 = 1$, $w_2 = 0$, and $\varepsilon = \varepsilon_0$.
% \begin{equation}
% \begin{aligned}
% g_\psi(x, \varepsilon_0) = &\max_{\mu \ge 0} \; \mu \\
% \text{s.t.} & \quad \exists  u(0), \ldots, u(N-1) \in \Omega_{\varepsilon_0}(0), \\
% &\forall d(0), \dots, d(N-1) \in \Omega_{\mu}(0),\\
% &G_kx(k) \leq H_k, \text{ for } k  = 0, \dots,N.\\
% \end{aligned}
% \end{equation}
\end{proof}

% The result of the previous corollaries enables the transformation of the resilience metric computation from a robust optimization problem into a deterministic optimization problem. This reformulation eliminates uncertainties and reduces the optimization problem to a linear program (Corollary~\ref{theorem:2}), which offers significant advantages over nonlinear formulations (Corollary~\ref{theorem:1}), including convexity, guaranteed global optimality and polynomial-time solution methods \cite{boyd2004convex,bertsekas2009convex, andersen2000mosek}.
% \begin{remark}
% To reduce the non-linearity of the optimization problem, we have used the following formulation

% \begin{equation}
% \begin{aligned}
% g_\psi(x, \varepsilon_0) = &\max_{\mu \ge 0} \; \mu \\
% \text{s.t.} & \quad \exists \mathbf{\tilde{u}} \in \Omega_{1}(0), P \geq 0, \\
% &P A_b = \boldsymbol{\mu} \; E ,\\
% & P B_b \leq F(x, \mathbf{\tilde{u}} , \varepsilon_0),\\
% \end{aligned}
% \end{equation}
% which is different from the formulation of Theorem 1 in \cite{ait2025maximal}. Within the proof of Theorem~\ref{Theorem:7} and Theorem~\ref{Theorem:8}, when establishing the equivalence via Farkas' Lemma, we incorporate $\mu$ as a component of the matrix $E$ rather than including it in $F$. This property is particularly valuable for applications involving open-loop controllers, as it reduces the problem into a linear program. This reformulation yields a significant computational advantage for the closed-loop controller. When $\alpha_1$ is fixed or zero, the optimization problem becomes linear.
% \end{remark}

The next section introduces the case of nonlinear systems and employs scenario optimization to solve the required robust optimization problem.

\subsection{Nonlinear systems and nonlinear controllers}
\label{resilience_scenario_section}

Consider the nonlinear control system introduced in~\eqref{eqn:non linear controlled system}. We define the scenario-based optimization problem for the resilience metric following the same steps and notation as in Section~\ref{scenario_optimization_steps} considering the case $w_1 = 1$, $w_2 = 0$, and $\varepsilon = \varepsilon_0$. We begin by formulating the problem with infinitely many constraints. Then, by sampling from the disturbance set, we obtain
\begin{equation}
\label{eqn:scenario_pb_rew_resilience}
\begin{aligned}
\sup_{(\mu, \alpha) \in \mathbb{R}_{\ge 0} \times \mathbb{R}^d} \; & \mu \\
\text{s.t. } \; & (\mu, \alpha) \in \bigcap_{i=1}^{M} \mathcal{X}_{\boldsymbol{\delta}_i},
\end{aligned}
\end{equation}
where the scenario constraint set \(\mathcal{X}_{\boldsymbol{\delta}_i}\) is defined using normalized disturbances \(\boldsymbol{\delta}_i \in \mathcal{D} = \Omega_1(0)^N\), \(i=1,\dots,M\), sampled i.i.d. according to any probability measure \(\mathbb{P}\) on \(\mathcal{D}\)
\begin{equation}
\begin{aligned}
\mathcal{X}_\delta = \big\{ (\mu, \alpha)& \in \mathbb{R}_{\ge 0} \times \R^d \mid \\
& \xi(x, \pi_\alpha, \mu\boldsymbol{\delta}) \in \psi \times \Omega_\varepsilon(0)^N   \big\}.
\end{aligned}
\end{equation}
The corresponding optimality guarantee is given in Theorem~\ref{theorem:9} using the optimization problem in~\eqref{eqn:scenario_pb_rew_resilience}.

% \begin{corollary}
% \label{Theorem:3}
% Consider the system $\Sigma$ in~\eqref{eqn:non linear controlled system}. For $x \in X$, a specification $\psi \subseteq X^{N+1}$ as defined in \ref{Sec:specfication} and an input set $U \subseteq \R^m$, let $\theta^*_M$ be the solution to the optimization problem defined in~\eqref{eqn:scenario_pb_rew_resilience} for $M \in \mathbb{N}$ number of scenarios. For any probability measure \(\mathbb{P}\), any confidence \(\beta \in (0,1]\), and with \(b(k)\) defined as in~\eqref{eqn:epsilon} for \(k=0,\dots,M\), it holds that
% \[
% \mathbb{P}^M \big( V(\theta_M^*) \le b(s_M^*) \big) \ge 1 - \beta.
% \]
% \end{corollary}
\section{Computation of Effort Metric}
\label{sec:5}

In this section, we demonstrate how to compute the effort metric as well as the corresponding optimal controller for a predefined level of disturbance for the system. As in the previous section, we begin with LTV systems and linear controllers where we provide the exact solution and then a probabilistic solution for the case of nonlinear systems and nonlinear controllers.
\subsection{Time-varying linear systems and linear controllers}
% \subsubsection{Closed-loop controllers}
% This section addresses the computation of controllers maximizing the resilience metric for linear discrete-time systems. 
Consider the system $\Sigma$ in~\eqref{eqn:time varying linear sys}. In the next corollary, building on the same system and specifications in the previous section, we demonstrate that the effort metric can be computed by formulating it as a standard optimization problem. 
% The first theorem addresses the closed loop controller, while the second focuses on the open loop controller.  

\begin{corollary}
\label{Theorem:4}
 Consider the linear system $\Sigma$ in~\eqref{eqn:time varying linear sys} with a linear controller $\pi_\alpha(x(k)) = \alpha_1 x(k) + \alpha_2$, and the specification $\psi$ as in~\eqref{polytopic_spec}. We have
\begin{equation}
\label{eq:effort closed loop}
\begin{aligned}
h_\psi(x, \mu_0) = &\min_{\varepsilon \ge 0, \,  \alpha_1 \in \mathbb{R}^{m \times n}, \alpha_2 \in \mathbb{R}^{m}, P \geq0} \; \varepsilon \\
s.t  \quad& P A_b =\mu_0 E^{cl}(\alpha_1) ,\\
&   P B_b \leq F^{cl}(x, \alpha_1, \alpha_2, \varepsilon), \\
& P \geq 0,
\end{aligned}
\end{equation}
where matrices $A_b, B_b, E^{cl}(\cdot), F^{cl}(\cdot)$ are defined in \eqref{Ab_Bb_closed_1} and \eqref{E_F_closed_1} respectively. 
\end{corollary}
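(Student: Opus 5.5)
The plan is to derive Corollary~\ref{Theorem:4} from the reformulation behind Theorem~\ref{Theorem:7}, with the disturbance level frozen at $\mu=\mu_0$ and $\varepsilon$ kept as the decision variable. One cannot simply plug weights into $p_\psi$: taking $w_1=0$, $w_2=1$ maximizes $-\varepsilon$ while leaving $\mu$ free, and $\mu=0$ is always admissible. So the argument should be rerun at the level of the feasible set rather than the objective.

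\textbf{Step 1 (pointwise characterization).} Fix $\mu_0\ge 0$, $\varepsilon\ge 0$, $\alpha_1$ and $\alpha_2$. I will isolate the equivalence that the proof of Theorem~\ref{Theorem:7} establishes for fixed $(\mu,\varepsilon,\alpha)$. Namely, $\xi(x,\pi_\alpha,\mu_0)\subseteq\psi\times\Omega_\varepsilon(0)^N$ holds if and only if there exists $P\ge 0$ with
\[
PA_b=\mu_0E^{cl}(\alpha_1), \qquad PB_b\le F^{cl}(x,\alpha_1,\alpha_2,\varepsilon).
\]
To obtain it, unroll the closed-loop dynamics $x(k)=F_k(x,\alpha_1,\alpha_2)+E_kY$, where $Y$ stacks the disturbances. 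The input is then $u(k)=S_k+\alpha_1E_kY$. The state constraints $G_kx(k)\le H_k$ and the input constraints $A_uu(k)\le\varepsilon B_u$ stack into $E^{cl}Y\le F^{cl}$.

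\textbf{Step 2 (normalization and Farkas).} Normalize by writing $Y=\mu_0\tilde Y$ with $A_b\tilde Y\le B_b$. The robust constraint then reads $\mu_0E^{cl}\tilde Y\le F^{cl}$ for all $\tilde Y$ in the unit box. Since the unit box is nonempty, the affine Farkas lemma (Lemma~\ref{theorem farkas}) gives the existence of $P\ge0$ as stated in Step 1. The case $\mu_0=0$ needs no separate treatment: there $PA_b=0$ and $PB_b\le F^{cl}$ with $P=0$ reduce to nominal feasibility $F^{cl}\ge 0$.

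\textbf{Step 3 (take the infimum).} Taking the infimum over $(\varepsilon,\alpha_1,\alpha_2,P)$ turns Definition~\ref{def:effort_closed} into~\eqref{eq:effort closed loop}. By Remark~\ref{substitution min_inf}, since $\psi$ is closed and polytopic, the infimum is attained and can be written as a minimum.

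\textbf{Main obstacle.} The delicate point is reconciling the degenerate branch of Definition~\ref{def:effort_closed}, where the value is $0$ if no controller makes the nominal trajectory satisfy $\psi$, with the optimization problem, which is infeasible in that case. I would handle this as the paper does for the resilience corollaries: state the equivalence on the nontrivial branch and adopt the convention that an infeasible program returns the default value. A secondary check is attainment of the minimum. Feasibility in $\varepsilon$ is upward closed, because $F_u$ increases with $\varepsilon$ and so $\varepsilon$ can always be enlarged. The feasible set is closed in $(\varepsilon,\alpha,P)$ because all constraints are polynomial equalities and non-strict inequalities. Compactness of the relevant sublevel set may fail in $\alpha$, so here I would simply rely on Remark~\ref{substitution min_inf}.
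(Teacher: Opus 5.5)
Your argument is the one the paper intends. The paper obtains this corollary by rerunning the proof of Theorem~\ref{Theorem:7} with $w_1=0$, $w_2=1$ and $\mu$ frozen at $\mu_0$: stack the state and input constraints into $E^{cl}Y\le F^{cl}$, then apply the affine Farkas lemma over the unit box. A few of your details are better than the paper's. Your normalization $Y=\mu_0\tilde Y$ is cleaner than the paper's division by $\mu$, since it covers $\mu_0=0$ directly. You also invoke the right Farkas hypothesis, namely nonemptiness of $\{\tilde Y \mid A_b\tilde Y\le B_b\}$. Two of your side remarks are correct: weights cannot simply be plugged into $p_\psi$, and the degenerate branch of Definition~\ref{def:effort_metric_unified} (value $0$) does not match an infeasible program. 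The paper does not address the latter.

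The step that fails is the attainment claim in Step~3, and Remark~\ref{substitution min_inf} cannot supply it. That remark is imported from the autonomous setting. Here $(\alpha_1,\alpha_2)$ ranges over an unbounded set, so the projection of the (closed) feasible set onto the $\varepsilon$-axis need not be closed.

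Here is a concrete counterexample. Take the scalar system $x(k+1)=x(k)+u(k)+d(k)$ with $x(0)=0$, $N=2$, $\Gamma_0=[-1,1]$, $\Gamma_1=[-1,0]$, $\Gamma_2=\{1\}$ and $\mu_0=0$. Then $x(1)=u(0)=\alpha_2$ and $x(2)=\alpha_2(2+\alpha_1)$. Reaching $x(2)=1$ forces $\alpha_2\in[-1,0)$: if $\alpha_2=0$, then $x(1)=x(0)$ and the static feedback repeats $u(1)=u(0)=0$. It then follows that $\alpha_1=1/\alpha_2-2$ and $u(1)=1-\alpha_2>1$. Hence the feasible values of $\varepsilon$ are exactly $(1,\infty)$. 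So $h_\psi(0,0)=1$, but no feasible $(\varepsilon,\alpha_1,\alpha_2,P)$ attains it.

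What your derivation (and the paper's) actually proves is the identity with $\inf$ in place of $\min$. Attainment needs an extra assumption, such as a compact parameter set for $\alpha$. This holds automatically in the open-loop Corollary~\ref{Theorem:5}, where $\mathbf{\tilde{u}}\in\Omega_1(0)^N$. The defect is inherited from the statement rather than introduced by you, but you should not claim the minimum is attained.
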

% \begin{proof}
% % By Definition~\ref{def:effort_metric_unified}, the effort metric for the closed-loop controller is given by
% % \begin{equation}
% % \label{eq:effort closed loop 2}
% % \begin{aligned}
% % h_\psi(x, \mu_0) = &\min_{\varepsilon \ge 0} \; \varepsilon \\
% % \text{s.t.} \;\; & \exists \alpha_1 \in \mathbb{R}^{m \times n}, \alpha_2 \in \mathbb{R}^{m},\\
% % &\forall d(0), \dots, d(N-1) \in \Omega_{\mu_0}(0),\\
% % &G_k x(k) \leq H_k, \hspace{0.5cm} \text{for } k = 0,\dots,N,\\
% % & \alpha_1 x(j) + \alpha_2 \in \Omega_{\varepsilon}(0), \; \hspace{0.5cm} \text{for } j = 0,\dots,N-1.
% % \end{aligned}
% % \end{equation}
% % The equivalence between the optimization problem in \eqref{eq:effort closed loop} and the optimization problem in~\eqref{eq:effort closed loop 2} can be established following the same steps as the proof of Theorem~\ref{Theorem:7} with $w_1 =  0$, $w_2 = 1$ and a fixed disturbance bound $\mu =  \mu_0$.
% By Definition~\ref{def:effort_metric_unified}, the closed-loop resilience metric equals \eqref{eq:effort closed loop} via Theorem~\ref{Theorem:7} with $w_1 = 0$, $w_2 = 1$, and $\mu = \mu_0$.
% \end{proof}

This corollary establishes that computing the minimum control effort for linear systems under linear state-feedback can be formulated as a deterministic optimization problem.

This result enables a tractable, exact synthesis procedure for effort-optimal controllers, converting an inherently robust optimization into a finite constraints polynomial program. As a consequence, it provides a practical design methodology that not only quantifies the minimum control effort required to satisfy performance specifications, but also guarantees robustness against bounded disturbances.

We also provide a closed-form expression of the effort metric for the case of time-varying linear systems, polytopic specifications, and open-loop controller.

\begin{corollary}
    \label{Theorem:5}
Consider the system $\Sigma$ in~\eqref{eqn:time varying linear sys} with a linear controller defined by a sequence of inputs $\uu = (u_0,u_1,\dots, u_{N-1}) \in U^N$, and the specification $\psi$ as in~\eqref{polytopic_spec}. We have
\begin{equation}
\label{eq:effort open loop}
\begin{aligned}
h_\psi(x, \mu_0) = &\min_{\varepsilon \ge 0, \,  \mathbf{\tilde{u}} \in \Omega_{1}(0)^N, P \geq0} \; \varepsilon \\
s.t  \quad &P A_b = \mu_0 E^{ol} ,\\
&P B_b \leq F^{ol}(x, \varepsilon, \mathbf{\tilde{u}}), \\
& P \geq 0,
\end{aligned}
\end{equation}
where matrices $A_b, B_b, E^{ol} (\cdot), F^{ol} (\cdot)$ are defined in \eqref{Ab_Bb_closed_1} and \eqref{E_F_open_loop_1}, respectively. 
\end{corollary}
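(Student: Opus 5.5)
The approach mirrors the proofs of Corollaries~\ref{theorem:1} and \ref{theorem:2}: we specialize Theorem~\ref{Theorem:8} to the effort setting by fixing the disturbance level and dropping the resilience term from the objective. Concretely, we take $w_1 = 0$, $w_2 = 1$ and freeze $\mu = \mu_0$. The objective of \eqref{eq:pareto_open_loop} then becomes $-\varepsilon$, and maximizing $-\varepsilon$ is the same as minimizing $\varepsilon$. Since the constraints of \eqref{eq:pareto_open_loop} no longer involve $\mu$ as a decision variable, they reduce exactly to those of \eqref{eq:effort open loop}.

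\textbf{Step 1: rewrite the definition.} Using Definition~\ref{def:effort_metric_unified} for open-loop controllers, $h_\psi(x,\mu_0)$ is the minimal $\varepsilon\ge 0$ for which there exists $\uu\in\Omega_\varepsilon(0)^N$ with $G_k x(k)\le H_k$ for all $k=0,\dots,N$ and all $\dd\in\Omega_{\mu_0}(0)^N$. The infimum is attained by Remark~\ref{substitution min_inf}.

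\textbf{Step 2: factor the input and unroll the dynamics.} Write $\uu=\varepsilon\uuu$ with $\uuu\in\Omega_1(0)^N$. This encodes the input bound without a separate input constraint, which is why $F^{ol}$ contains no $\varepsilon B_u$ term. Unrolling \eqref{eqn:time varying linear sys} gives
\[
x(k)=\tilde A_{k,-1}x+\varepsilon\sum_{i<k}\tilde A_{k,i}B_i\tilde u(i)+\sum_{i<k}\tilde A_{k,i}d(i).
\]
Stacking $Y=\frac{1}{\mu_0}[0,d_0,\dots,d_{N-1}]^\top$, the condition that all $G_kx(k)\le H_k$ hold becomes $\mu_0E^{ol}Y\le F^{ol}(x,\varepsilon,\uuu)$ for all $Y$ with $A_bY\le B_b$.

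\textbf{Step 3: apply Farkas' lemma.} By the affine Farkas lemma (Lemma~\ref{theorem farkas}), this robust inequality holds if and only if there is a $P\ge0$ with $PA_b=\mu_0E^{ol}$ and $PB_b\le F^{ol}$. The lemma needs the set $\{Y: A_bY\le B_b\}$ to be nonempty, which holds because it is a box.

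\textbf{Step 4: the degenerate case.} When $\mu_0=0$, the scaling in $Y$ is ill-defined. I would handle it directly: the robust constraint becomes the nominal one, and the choice $P=0$ works when $F^{ol}\ge0$. Conversely, $PB_b\le F^{ol}$ with $P\ge0$ and $B_b\ge0$ forces $F^{ol}\ge 0$, so the equivalence survives.

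\textbf{Main obstacle.} The most delicate point is the non-satisfiable branch of the definition, where $h_\psi$ is set to $0$ while the program is infeasible. I would state that in this case both sides are understood by the convention of the definition, as the earlier corollaries implicitly do, and that the equivalence applies on the feasible branch.
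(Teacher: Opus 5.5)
Your proposal is correct and follows the paper's route: obtain the corollary from Theorem~\ref{Theorem:8} with $w_1=0$, $w_2=1$, $\mu=\mu_0$. In substance this means rerunning its Farkas-lemma argument with $\mu$ fixed at $\mu_0$, exactly as your Steps~2--3 do; the statement of Theorem~\ref{Theorem:8} alone, with $\mu$ still free, would only give $-h_\psi(x,0)$. Your explicit handling of $\mu_0=0$, of the correct nonemptiness hypothesis for Farkas' lemma, and of the unsatisfiable branch of Definition~\ref{def:effort_metric_unified} are careful additions that the paper leaves implicit.
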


This definition of effort metric captures the following intuition: it quantifies the minimum control input magnitude needed to ensure the system satisfies the given specification, when the system trajectory is subjected to disturbances $d(k) \in \Omega_{\mu_0}(0)$ for $k \in \N$. In the special case where disturbances are zero, and the nominal behavior does not already meet the specification, the effort metric answers the question of finding the minimal control effort required to make the system satisfy the specification, and the associated controller parameters $\alpha$ enabling this behavior. Any effort below this threshold would be insufficient for the system to meet the specification, rendering satisfaction impossible. Conversely, if the given perturbation bound is increased beyond the resilience metric without input constraint, the corresponding optimization problem becomes infeasible.

\begin{remark}
In this section, we considered the computation of the effort metric for a bound on the disturbances given by $\mu_0$. For the linear closed-loop controller, the constraints of the optimization problem are linear in $\varepsilon$ and polynomial in $\alpha_1$ and $\alpha_2$. For the open-loop controller the constraints of the optimization problem are polynomial in $\varepsilon$ and $\uuu$.
\end{remark}
\begin{remark}
To fully characterize a time-varying system coupled with an initial point $x$ and a specification $\psi$, we first compute the inherent resilience metric without input constraints to obtain the maximum tolerable disturbance $\mu_{\text{max}} = g_{\psi}(x_0)$. We then calculate the corresponding minimal required effort $\varepsilon_{\text{max}} = h_\psi(x_0, \mu_{max})$ under that worst-case disturbance $\mu = \mu_{\text{max}}$. Finally, we evaluate the effort metric in the absence of perturbation $\mu = 0$ to determine the baseline minimum effort $\varepsilon_{\text{min}} = h_{\psi}(x_0)$. This yields two fundamental operating intervals: the feasible disturbance range $[0, \mu_{\text{max}}]$ and the required effort range $[\varepsilon_{\text{min}}, \varepsilon_{\text{max}}]$. With this mapping, the system is completely characterized. For any chosen disturbance level $\mu \in [0, \mu_{\text{max}}]$, one can determine the minimal required effort $ h_\psi(x_0, \mu)$; conversely, for any given effort $\varepsilon \in [\varepsilon_{\text{min}}, \varepsilon_{\text{max}}]$, one can find the maximal allowable disturbance $g_\psi(x_0, \varepsilon)$.
\end{remark}

In the next section, we extend the computation of the effort metric to the case of nonlinear systems.

\subsection{Nonlinear systems and nonlinear controllers}
To define the scenario-based optimization problem for the effort metric, we follow the same procedure as in Section~\ref{scenario_optimization_steps}, considering the case \( w_1 = 0 \), \( w_2 = 1 \), and \( \mu = \mu_0 \). We begin by formulating the problem with infinitely many constraints. Then, by sampling from the disturbance set, we obtain
\begin{equation}
\label{eqn:scenario_pb_rew_effort}
\begin{aligned}
\inf_{(\varepsilon, \alpha) \in \mathbb{R}_{\ge0} \times \mathbb{R}^d} \; & \varepsilon \\
\text{s.t. } \; & (\varepsilon, \alpha) \in \bigcap_{i=1}^{M} \mathcal{X}_{\boldsymbol{\delta}_i},
\end{aligned}
\end{equation}
where the scenario constraint set \(\mathcal{X}_{\boldsymbol{\delta}_i}\) is defined using normalized disturbances \(\boldsymbol{\delta}_i \in \mathcal{D} = \Omega_1(0)^N\), \(i=1,\dots,M\), sampled i.i.d. according to any probability measure \(\mathbb{P}\) on \(\mathcal{D}\), as
\begin{equation*}
\mathcal{X}_{\boldsymbol{\delta}} = \left\{ (\varepsilon, \alpha)\in \mathbb{R}_{\ge 0} \times \mathbb{R}^d \; \middle| \; \xi(x, \pi_\alpha, \mu_0\boldsymbol{\delta}) \subseteq \psi \times \Omega_\varepsilon(0) \right\}.
\end{equation*}
The corresponding optimality guarantee is given in Theorem~\ref{theorem:9} using the optimization problem in~\eqref{eqn:scenario_pb_rew_effort}.

% \begin{corollary}
% \label{Theorem:6}
% Consider the system $\Sigma$ in~\eqref{eqn:non linear controlled system}. For $x \in X$, a specification $\psi \subseteq X^{N+1}$ as defined in \ref{Sec:specfication} and an input set $U \subseteq \R^m$, let $\theta^*_M$ be the solution to the optimization problem defined in~\eqref{eqn:scenario_pb_rew_effort} for $M \in \mathbb{N}$ number of scenarios. For any probability measure \(\mathbb{P}\), any confidence \(\beta \in (0,1]\), and with \(b(k)\) defined as in~\eqref{eqn:epsilon} for \(k=0,\dots,M\), it holds that
% \[
% \mathbb{P}^M \big( V(\theta_M^*) \le b(s_M^*) \big) \ge 1 - \beta.
% \]
% \end{corollary}
% % \begin{proof}
% % The result follows directly, since the original of problem~\eqref{eqn:scenario_pb_rew2} is a robust optimization problem, satisfying the required regularity conditions (Property 1 in \cite{garatti2024}).
% \end{proof}

% Thus, for a chosen confidence \(\beta\), a sufficiently large number of scenarios \(M\) guarantees, with high probability, that the solution \(\theta_M^*\) satisfies all but a fraction \(b(s_M^*)\) of the unseen disturbance sequences.

\section{Case studies}
\label{sec:6}
Numerical experiments were performed on a standard laptop equipped with an Intel® Core™ i7 processor running at 2.3 GHz, 32 GB RAM, and a 1 TB SSD. All simulations were executed on a 64-bit Windows 11 operating system using Python 3, on a single core without GPU acceleration.

\subsection{Mobile Robot}
We consider the linear dynamics of a mobile robot as in~\eqref{eqn:time varying linear sys}, where $A = B =\mathbb{I}_2$. The state is a two-dimensional vector characterized by the position $x \in\left[-1, 1.7\right] \times\left[0, 2\right]$ and the input vector $u \in  \R^2$ without constraints.
To describe the desired behavior of the system, we first define the three regions: $R_1= [-0.3, 0.3] \times [0.6, 1.25]$, $R_2 = [0.8, 1.5] \times  [ 1.2, 1.75]$ and $R_3 = [-1, 1.7] \times [0, 2]$.% as shown in Figure~\ref{fig:trajectory_nominal_max_dist}.
The desired behavior we want to satisfy is given by the $LTL_f$ formula
\begin{equation}
\label{spec:psi}
    \psi = \nex^2 R_1 \wedge  \square^{[4,6]} R_2 \wedge \square^{[0,6]} R_3, 
\end{equation}
which can be described as follows: remain in region $R_3$ from the start until step $6$ ( $\square^{[0,6]} R_3$), while reaching region $R_1$ at time step $2$ ($\nex^2 R_1$) and staying in region $R_2$ between time instances $4$ and $6$ ($\square^{[4,6]} R_2$).
The input is controlled using the closed-loop linear controller $\pi_\alpha(x(k)) = u(k) = \alpha_1 x(k) + \alpha_2$, where $(\alpha_1, \alpha_2) \in \mathbb{R}^{2\times 2} \times \mathbb{R}^{2}$ are the controller parameters. All trajectories in this example start from the same initial state $x(0) = (0, 0.2)$. 

\textbf{Resilience:} after solving the optimization problem outlined in Corollary~\ref{theorem:1}, we obtain the following values for the controller parameters: 
$$
\alpha_1 = \begin{pmatrix} -0.99 & 1.62 \\ -0.11 & -0.33 \end{pmatrix}, \quad \alpha_2 = (-1.028, 0.574),
$$
and the corresponding maximum disturbance is $g_{\psi}(x(0)) = 0.0686$. These values are obtained by solving the optimization problem using Python library Pyomo and Knitro solver \cite{byrd2006knitro} implemented based on the {interior-point methods} \cite{NumericalOp}.

\textbf{Effort metric: }in Figure~\ref{fig:pareto_curve}, the red point represents the minimal effort $h_\psi(x(0)) = 0.25$ for the nominal dynamical system $\mu_0 = 0$ which corresponds to the minimal input bound required for the system to satisfy the specification. The corresponding controller values are
$$
\alpha_1 = \begin{pmatrix} -0.99 & 0.99 \\ -7.73 \times10^{-10} & 3.05\times10^{-10} \end{pmatrix}, \quad \alpha_2 = (-0.15,  0.25),
$$
obtained after solving the optimization problem in Corollary~\ref{Theorem:4}. The minimal effort required for the maximal resilience is $\varepsilon_{max} = h_\psi(x(0), g_\psi(x(0)) = 1.001$.

\textbf{Resilience-effort metric}: The blue points on the curve in Figure~\ref{fig:pareto_curve} represent the optimal solutions obtained by solving the resilient-effort optimization problem described in Theorem~\ref{Theorem:7} for different values of $w_1$ and $w_2$. For example, with $w_1 = 0.5$ and $w_2 = 0.05$, the Pareto optimization returns a resilience value $\mu = 0.053$ and an effort value $\varepsilon = 0.559$, which corresponds to the black dot on the same curve.

This curve represents the Pareto frontier, which shows the precise boundary between achievable and unachievable resilience–effort pairs and describes the fact that achieving higher resilience requires higher control effort.

% . To generate this curve, we fix values of disturbances satisfying $\mu_0 < g_\psi(x(0))$ and calculate the effort metric $h_\psi(x(0), \mu_0)$. Note that the same curve can be drawn using the input bound values $\varepsilon_0 \in \bigl[h_\psi(x(0)), \varepsilon_{\max}\bigr]$ and then calculate the corresponding resilience value $g_\psi(x(0),\varepsilon_0)$. 
The green region above the curve corresponds to feasible pairs $(\varepsilon, \mu)$ for which a controller can be synthesized to meet the desired resilience level using an effort greater than the minimal required. Any point in this region is therefore achievable, although not necessarily optimal. In contrast, the red region below the curve represents the infeasible domain, where no linear controller can simultaneously achieve the specified resilience and effort levels. 

\textbf{Open-loop controller}: Figure~\ref{fig:opel_loop_pareto_curve} presents the Pareto-curve for the system using an open-loop controller under the same specification and initial point as the previous example. The value of resilience computed by solving the optimization problem in Corollary~\ref{theorem:2} is $\mu_{max}= g_\psi(x_0) = 0.0458$. The minimal effort required to achieve it is $h(x_0, \mu_{max} )= 0.39$ and minimal effort required for making the nominal system satisfy the specification is $h(x_0) = 0.25$ both computed using Corollary~\ref{Theorem:5}. As expected, higher resilience values require higher control effort. Lower resilience values compared to the open-loop case indicate that the closed-loop controller steers the system more effectively toward resilient behavior.
 
% \begin{figure}[htbp]
%     \centering
%     \includegraphics[scale=0.5]{Journal version/Figures/beloved_2.png}
%     \caption{Illustration of two controlled trajectories of the robot starting from the initial state $x(0)=(0, 0.2)$ for $6$ time steps where both trajectories must satisfy the specification $\psi$ in Equation~\eqref{spec:psi}. The yellow trajectory represents the state evolution without disturbances $\varepsilon = 0$ while the blue trajectory represents the evolution with a disturbance exceeding resilience bounds $\varepsilon > g_\psi(x(0))$.}
%     \label{fig:trajectory_nominal_max_dist}
% \end{figure}

% \begin{figure}[htbp]
%     \centering
%     \includegraphics[scale=0.55]{Journal version/Figures/LTL_farkas_1002.png}
%     \caption{Illustration of controlled trajectories of the robot starting from the initial state $x(0)=(0, 0.2)$ for $6$ time steps under the specification $\psi$ in Equation~\eqref{spec:psi} with disturbances within the resilience bounds $\varepsilon \leq g_{\psi}(x(0))$.}
%     \label{fig:trajectory_within_disturbance}
% \end{figure}
% From a computation view point the algorithm tooks ??? s to converge while in closed loop it tooks ???s. 
\begin{figure}[htbp]
    \centering
    \includegraphics[scale=0.55]{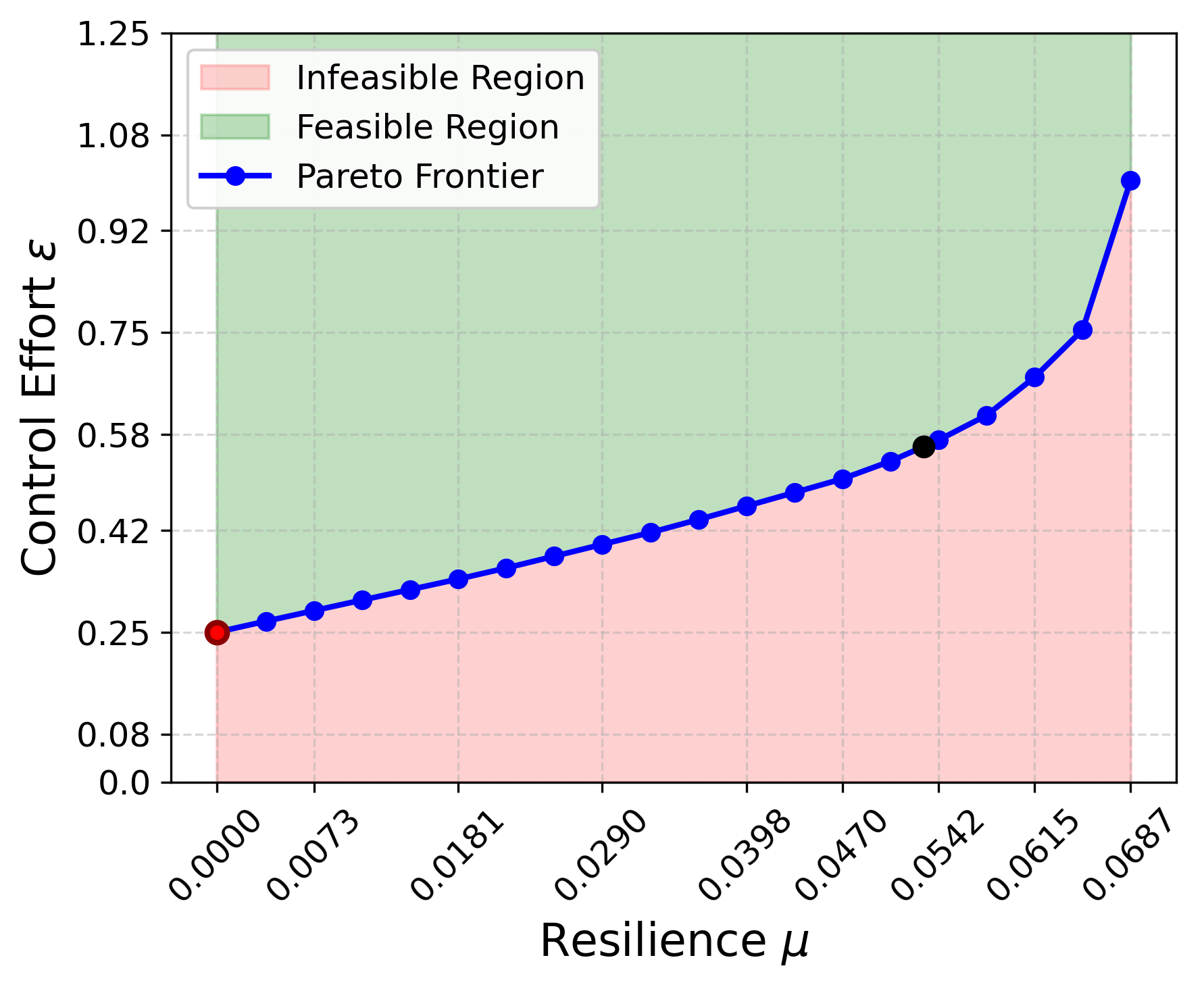}
    \caption{Illustration of the Pareto curve in blue which corresponds to the optimal trade-off for resilience and effort metric for controlled trajectories of the robot under the specification $\psi$ in Equation~\eqref{spec:psi}. The green region denotes the feasible set where a controller can be synthesized to achieve the effort-resilience level while the red region is the infeasible set where no controller can achieve the trade-off level.}
    \label{fig:pareto_curve}
\end{figure}

\begin{figure}[t]
    \centering
    \includegraphics[scale=0.55]{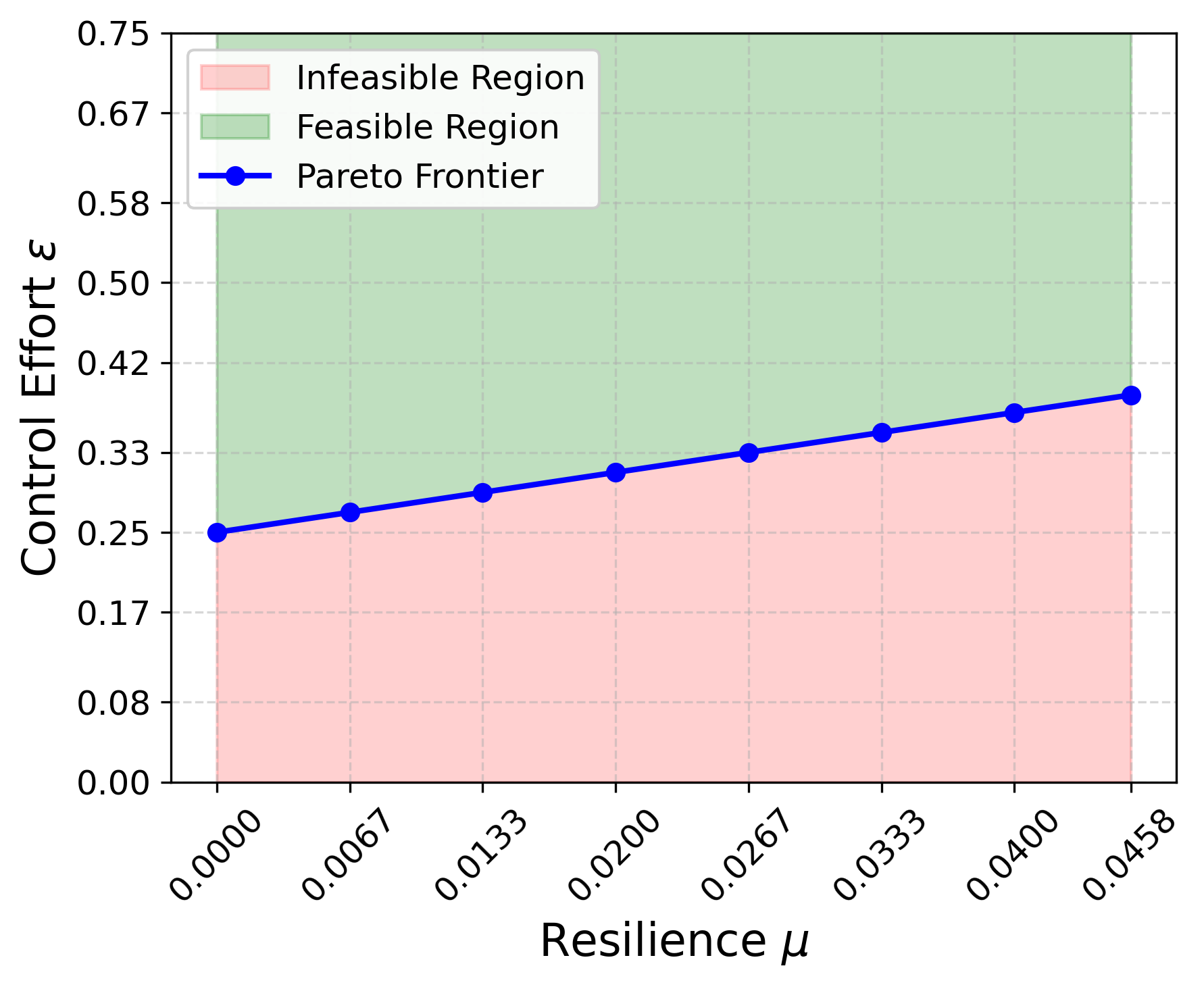}
    \caption{Illustration of the Pareto curve in blue which corresponds to the optimal trade-off for resilience and effort metric for controlled trajectories of the robot under the specification $\psi$ in Equation~\eqref{spec:psi}. The green region denotes the feasible set where an open-loop controller can be synthesized to achieve the effort-resilience level while the red region is the infeasible set where no open-loop controller can achieve the trade-off between effort and resilience.}
    \label{fig:opel_loop_pareto_curve}
    \vspace{-0.5cm}
\end{figure}

% \begin{figure}[H]
%     \centering
%     \includegraphics[scale=0.55]{Journal version/Figures/plot22.pdf}
%     \caption{ Pareto Frontier: Trade-off Between Robustness $\mu$ and Control Effort ($\varepsilon$)}
%     \label{fig:trajectory_within_disturbance}
% \end{figure}

% \begin{figure}[htbp]
%     \centering
%     \includegraphics[scale=0.8]{Figures/Farkas_linear.png}
%     \caption{Linear system with linear controller using theorem1}
%     \label{fig:trajectory_heatmap}
% \end{figure}

% Next, we added a linear controller to the dynamical system, with the input bounded between $-2$ and $2$. The problem was solved according to equation \eqref{eqn:scenario_pb}, using the previous values, the starting point $c_4 = (6, 6)$ and the matrix:

\subsection{Adaptive Cruise Control} 
\label{ACC}
Consider a vehicle following a lead car moving at a constant velocity $v_0$ on a straight road. The following vehicle adjusts its speed to maintain a safe distance while responding to changes in the environment or road conditions. The vehicle dynamics, adapted from \cite{ACC_Saoud}, are described by
\begin{equation}
\label{eq:ACC dynamics}
    \left\{\begin{aligned}
h(k+1) & =h(k)+\tau\left( \delta v_0 + v_0-v(k)\right), \\
v(k+1) & =v(k)+\frac{\tau}{m}\left(F(k)+ \delta f_0 -f_0-f_1 v(k)-f_2 v(k)^2\right),
\end{aligned}\right.
\end{equation}
where $v \geq 0$ represents the velocity of the vehicle, $h$ the distance between the lead and second vehicle, $m>0$ is the mass of the vehicle, and the term $f_0+f_1+f_2 v^2$ includes the rolling resistance and aerodynamics and $\tau$ represents a sampling period. The disturbance is modeled by $\delta v_0$, which is the uncertainty on the velocity $v_0$ of the lead vehicle and $\delta f_0$ is the uncertainty on the parameter
$f_0$. The variable $F$ represents the control input and must satisfy $F \in [\underline{F}, \overline{F}]$. The values of $\underline{F}$, $ \overline{F}$ and other constants are given in Table~\ref{fig:ACC_values}.

All trajectories in this example start from the same initial state $x(0) = (60, 15)$. We have conducted two experiments, the first using a linear controller defined for $x =(h, v) \in \R^2$ as $\pi_\alpha(x) = \alpha_1 x + \alpha_2$, where $\alpha_1 \in \R^2$ and $\alpha_2 \in \R$ are the parameters of the controller.
We consider the specification: $\psi=\psi_1 \wedge \psi_2$ with $\psi_1:=\bigcirc^3B_1$ and $\psi_2:=\nex^4 B_2$. This specification can be interpreted as follows: the relative position and velocity should reach the set $B_1= \mathcal{B}_{\sqrt{0.1}}((58.75, 16.4))$ in $3$ steps and reach the set $B_2= \mathcal{B}_{\sqrt{0.1}}((57.75, 15.6))$ in $4$ steps, which means that we will force the velocity to increase, which results in the decrease of the distance between the two cars. The objective is to compute the resilience metric under which the trajectory of the system satisfies the specification $\psi$.

The dynamical system in this use case is non-linear, and we used a scenario approach described in the Section~\ref{resilience_scenario_section}. We obtain the solution of the optimization problem for a set of $M$ i.i.d. scenarios $M$ sampled from uniform probability measure on the space of disturbances $\mathcal{D}$. This optimization is solved using the Python library Pyomo and the Knitro solver, which is based on interior-point methods \cite{NumericalOp}.

The optimal solution, the complexity $s_M^*$, and the violation bound $b$ are given in Table~\ref{tab:scnario_table} for different number of scenarios $M$ and different confidence values $\beta$.
One can see that increasing the number of scenarios decays the value of the resilience, which is expected since exploring more scenarios tends to include more constraints and allows for a tighter approximation of the resilience metric. To explain the values in Table~\ref{tab:scnario_table}, let us take the column of the number of scenarios $M = 500$. The corresponding value of the complexity is $s^*_M = 9$, which corresponds to the number of support constraints. Then, to calculate the bound $b$, we choose a level of confidence $\beta$. Taking $ \beta = 10^{-2}$, we can now calculate the bound $ b(s_M^*) = 0.046$. Hence, the probability of violation is bounded by $0.046$ with confidence $1-10^{-2}$. 
The probability guarantee provided by Theorem~\ref{theorem:9} is 
$
\mathbb{P}^M \big(V(\theta_M^*) \le b(s_M^*)\big) \ge 1 - 10^{-2} = 0.99
$, which means that we are requiring a $99\%$ confidence level that the violation probability is below the value \( b(s_M^*) \). In this case, there is only a $4.6\%$ chance that a new scenario constraint will not be satisfied by the solution \( \theta_M^* \).
The relationship between the bound \( b \) and the confidence level \( \beta \) is inversely proportional, as demonstrated in the table. For \( M = 10 \), the bound on the violation probability is very large \( 0.851 \). However, as we increase the number of scenarios to \( M = 500 \), the bound on the violation probability decreases, reaching a reasonable value equal to \( 0.046 \). This indicates that by increasing the number of scenarios, we are able to obtain a tighter and more reliable bound on the violation probability, effectively reducing the likelihood of a scenario violating the solution. The value of effort metric for $M= 100$ scenarios is $h_\psi(x(0))=  2295.55 \,kg\,m/s^2$.

    Figure~\ref{fig:linear_ACC} shows trajectories with disturbance less than the resilience metric $g_{\psi}(x(0)) = 0.036$ computed for $M = 100$. One can readily see that the trajectories reach the target set $B_1$ (in light blue) in three time steps and then reach the target set $B_2$ (in light red) in four time steps. At four time steps, we observe that only few points fall outside the set $B_2$, which confirms that, while our solution is not exact, it is probabilistically guaranteed. The values of the variable $\delta v_0$ and $\delta f_0$ for maximal resilience metric results correspond to the following intervals $v_0 = [-0.072, 0.072]$ 
and $f_0 = [-98.36,98.36]$.
%
%$f_0 = [−82.2, 82.2]$
%and $f_0 = [−82.2, 82.2]$.
%

\begin{figure}[t]
     \centering
     \includegraphics[width=0.5\textwidth]{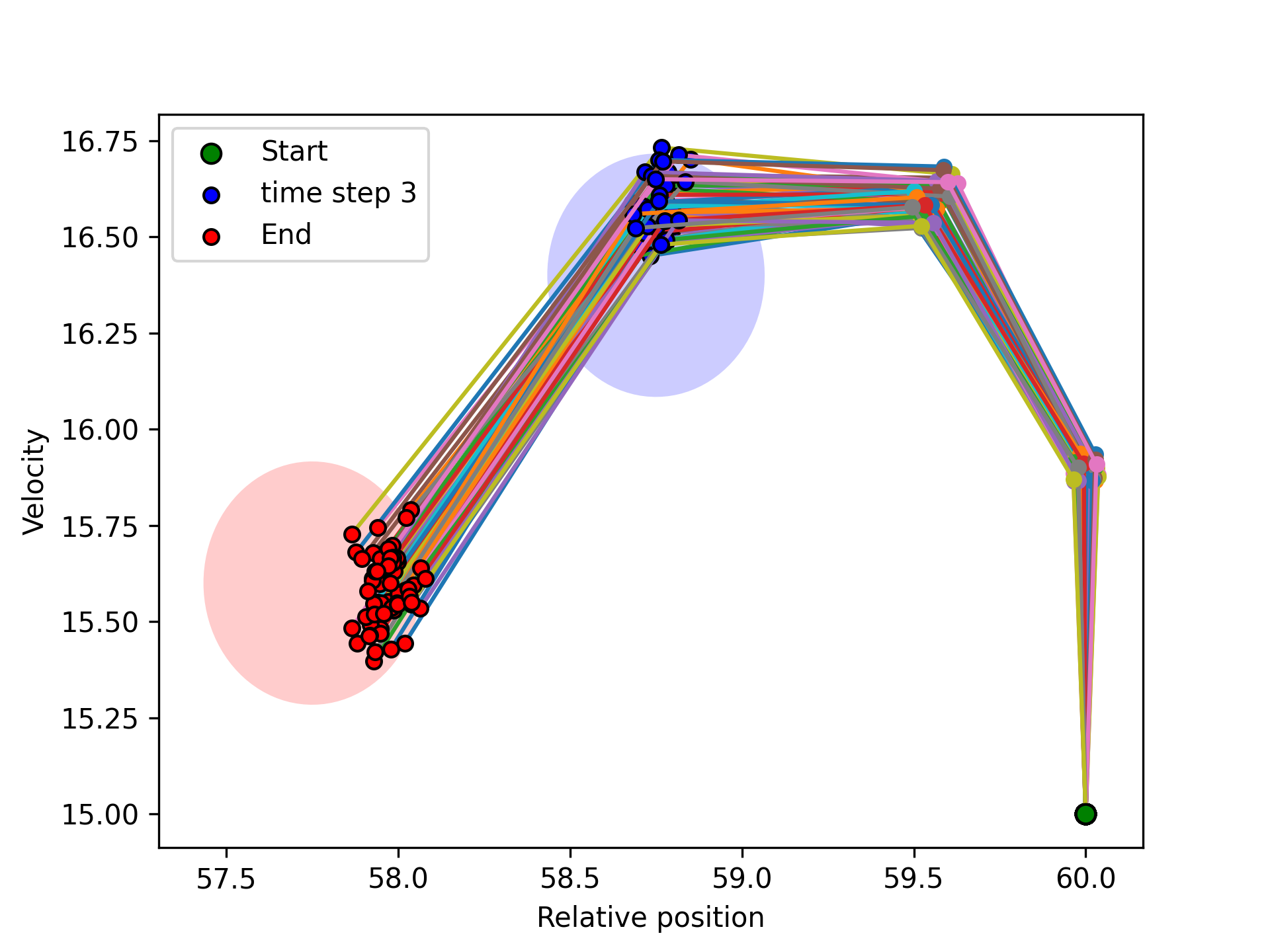}
   \caption{Sample trajectories for adaptive cruise control from Section~\ref{ACC} with the optimal linear controller and with disturbances less than the maximum disturbance given by the resilience metric $\varepsilon \leq g_{\psi}(x(0))=0.03$ for $M =100$ scenarios.}
\label{fig:linear_ACC}  
\vspace{-0.5cm}
\end{figure}
\begin{figure}[t]
     \centering
     \includegraphics[width=0.5\textwidth]{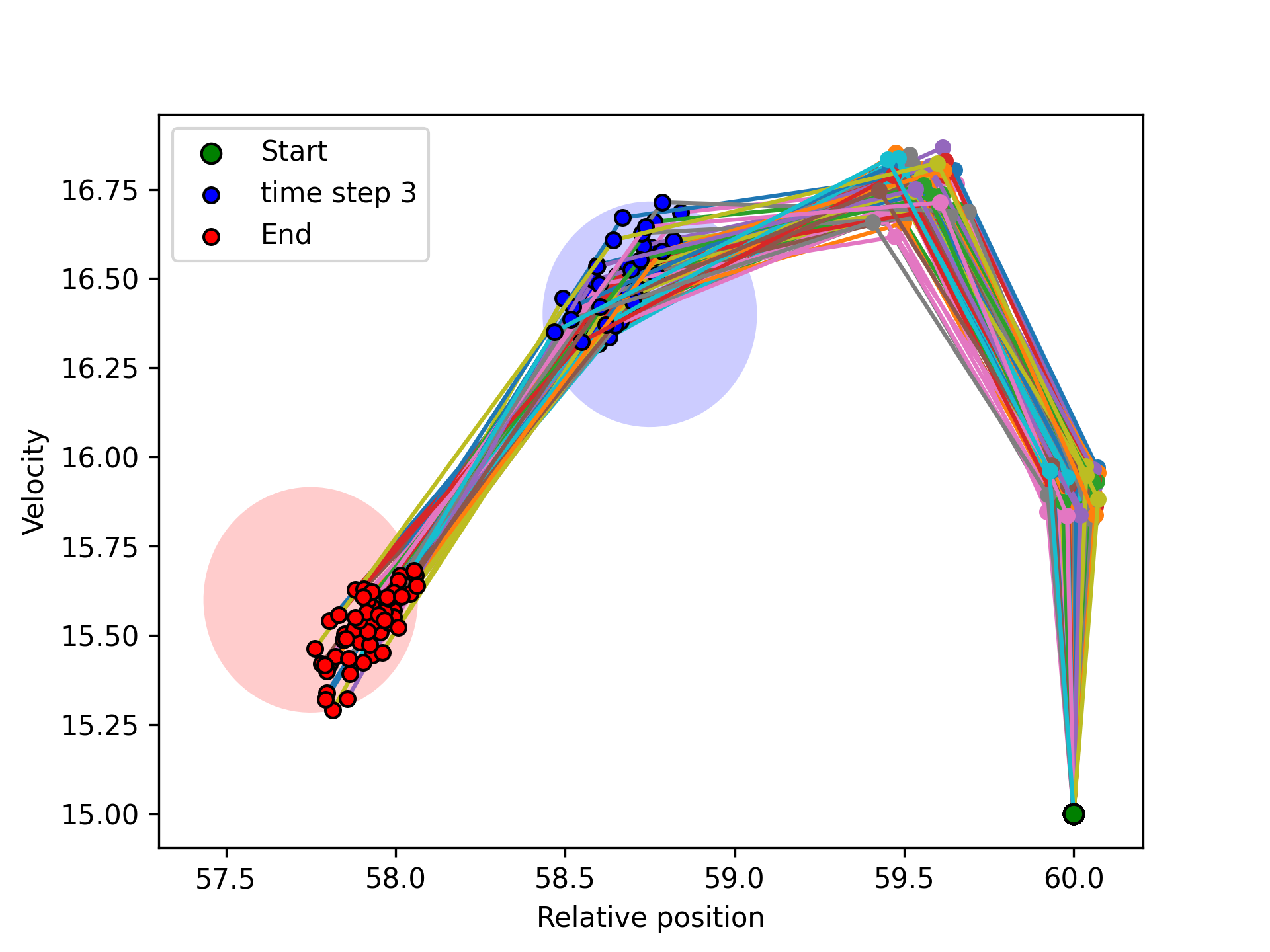}
     \caption{Sample trajectories for adaptive cruise control from Section~\ref{ACC} with the optimal polynomial controller and with disturbances less than the maximum disturbance given by the resilience metric $\varepsilon \leq g_{\psi}(x(0))=0.045 $ found for $M =100$ scenarios.}
\label{fig:polynomial_ACC}
\vspace{-0.5cm}
\end{figure}

% \vspace{-6.5em}
\begin{table}[t]
\centering
\renewcommand{\arraystretch}{1.5}
\setlength{\tabcolsep}{4pt} % Reduced column padding
\begin{tabular}{|c|c|c|c|c|c|c|}
\hline
Parameter & $m$ & $f_0$ & $f_1$ & $f_2$ & $\underline{F}$ & $\overline{F}$ \\ \hline
Value & 1370 & 51.0709 & 0.3494 & 0.4161 & -4031.9 & 2687.9 \\ \hline
Unit & kg & N & $\mathrm{N\,s/m}$ & $\mathrm{N\,s^2/m^2}$ & $\mathrm{kg\,m/s^2}$ & $\mathrm{kg\,m/s^2}$ \\ \hline
\end{tabular}
\caption{Parameters for the adaptive cruise control system~\cite{ACC_Saoud}.}
\label{fig:ACC_values}
\vspace{-0.3cm}
\end{table}
\begin{table}[t]
\centering
\resizebox{0.7\textwidth}{!}{% %
\begin{tabular}{|c|c|c|c|}
\hline
\diagbox[width=3cm, height=1.2cm]{variables}{ scenarios (M)}& $10 $ & $100$ & $500$\\
\hline
$\varepsilon^*_M$ & $0.039$& $0.0367$& $0.0305$\\
\hline
$\alpha_1^*$ &  $[23428.089 ,-567.85]$&   $[3377.689, -599.61]$& $[3426.05,-588.90] $\\
\hline
$\alpha_2^*$ &  $-194479.59$&  $-190979.28$&   $ -194041.69$\\
\hline
$s_M^*$ &  $4$&  $8$&   $9$\\
\hline
$b(s_M^*, \beta = 10^{-2})$&  $0.851$&  $0.202$& $0.046$\\
\hline
$b(s_M^*, \beta = 10^{-4})$& $0.936$& $0.259$& $0.059$\\
\hline
$b(s_M^*, \beta = 10^{-6})$& $0.971$& $0.307$& $0.072$\\
\hline
\end{tabular}}
\caption{Values for the optimal solution of the scenario optimization defined in~\eqref{eqn:scenario_pb_rew_resilience}, the complexity $s_M^*$ and the bound risk $b$ for different values of the number of scenarios $M$ and confidence $\beta$. }
\label{tab:scnario_table}
\vspace{-0.5cm}
\end{table}

To illustrate the case of nonlinear controllers, we have defined a polynomial controller of degree $2$ defined for $(h, v) \in \R^2$ as $\pi_\alpha(h, v) = \alpha_1 h^2 + \alpha_2 v^2 + \alpha_3 h v + \alpha_4 h + \alpha_5 v+ \alpha_6$. The objective is to design the parameters to maximize the resilience of the system under the same specification $\psi$.
The resulting trajectories in Figure~\ref{fig:polynomial_ACC} reach the target set $B_1$ (in light blue) in three time steps and reach the target set $B_2$ (in light red) in four time steps except for a small number of states as discussed in the previous example. The resulting value of the resilience metric for $M = 100$ is $g_{\psi}(x(0)) = 0.078$, which is a higher value for resilience under the linear controller. This indicates that the polynomial controller can force the system toward a more resilient behavior. The corresponding controller parameters are $\boldsymbol{\alpha}  = [\alpha_1, \alpha_2,\, \alpha_3, \,\alpha_4,\, \alpha_5,\, \alpha_6] = 
[-501.63,\; -997.46,\; 2142.48,\; 27197.67, \; -97858.42,\\ -59225.81]
.$
\subsection{Collision-Avoidance Scenario}

In a typical urban-intersection scenario, we consider an ego vehicle approaching from the southern direction and traveling northbound along a straight road. Simultaneously, an intruder vehicle approaches from the east, moving westbound at a constant velocity of 15 m/s. Their trajectories form a potential conflict point where the two paths intersect at a right angle~(c.f. Figure~\ref{collision_scenario}). The ego vehicle’s motion is governed by the following equations:
\begin{equation} 
\label{acc_modified}\left\{ \begin{aligned} r_x(k+1) &= r_x(k) - \tau \,v_L , \\[4pt] r_y(k+1) &= r_y(k) + \tau\, v(k), \\[4pt] v(k+1) &= v(k) + \frac{\tau}{m}\Big( F(k)  - f_1 v(k) - f_2 v(k)^2 \Big), \end{aligned} \right. \end{equation}
\begin{figure}[h]
     \centering
     \includegraphics[width=0.4\textwidth]{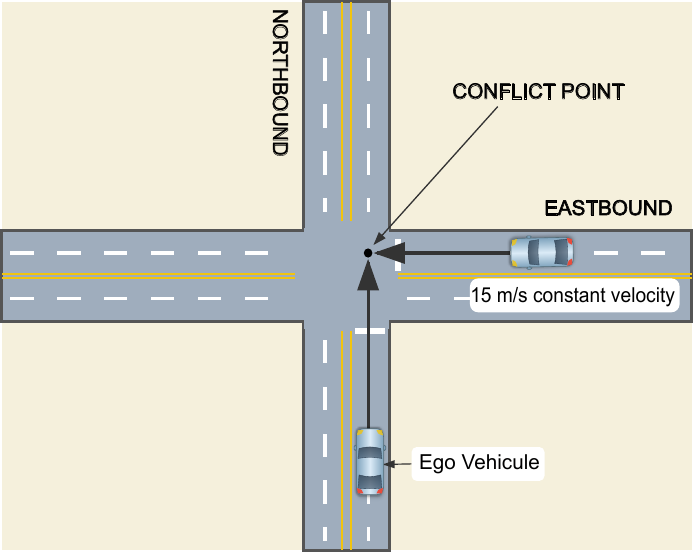}
     \caption{Illustration of the collision-avoidance scenario between the ego and the intruder cars.}
     \label{collision_scenario}
\end{figure}
Where $r_x$ and $r_y$ are the relative position of the two cars along the $x$-axis and the $y$-axis respectively. The safety specification means the Euclidean distance between the ego and intruder vehicles, $\mathsf{dist}(v, v_L)$, to remain at least than a minimum safe distance of $d_{\min} = 4m$ over a prediction horizon of seven time steps. Additionally, the ego vehicle must respect a prescribed velocity range: an upper bound to comply with traffic laws, and a lower bound to ensure the vehicle clears the intersection. Formally, this specification is given by $\psi =  \Box^7 D \wedge \Box^7 V$, where $ D = \{ (r_x, r_y, v) \in \mathbb{R}^3 \mid r_x^2 + r_y^2  \geq d_{min}\}$ and $ V = \{ (r_x, r_y, v) \in \mathbb{R}^3 \mid v_{\min} \le v \le v_{\max} \}$.
For the initial state $x_0 = (30, -27, 15)$ using a linear controller, the computed resilience for this specification is $g_{\psi}(x_0) = 1.12 \text{ m/s}$.
We evaluate four distinct scenarios to illustrate the system's behavior and the effectiveness of control strategies.
% \textcolor{red}{add nominal control fix at ...}
\begin{enumerate}
    \item Nominal Scenario (No Disturbance, No Control): The autonomous vehicle operates without external disturbances or an active collision-avoidance controller. The vehicles maintain a safe distance, and no clash occurs.

    \item Uncontrolled Scenario (Medium Disturbance): An additive disturbance $\mu =0.7m/s$ is introduced for the velocity. The disturbance causes the vehicle's trajectory to deviate, resulting in a collision and highlighting the necessity of a control strategy.

 \item  Active Control (Medium Disturbance): The same disturbance is applied, but the autonomous vehicle now employs a collision-avoidance (resilient) controller. The controller increases the acceleration to the maximum value, which results in increasing the vehicle's speed to create a larger separation. The speed remains elevated to maintain the safe distance. 

% 4.  Active Control (Maximum Tolerable Disturbance): A disturbance equal to the resilience bound $g_{\psi}(x_0) = 2.12 \text{ m/s}$ is applied while the controller remains active. The controller reacts aggressively by increasing the acceleration at its maximum value and maintaining this high control effort. Although this action would normally increase the vehicle’s velocity, the applied disturbance is strong enough, as a result, the overall velocity decreases. A collision eventually occurs, illustrating the limit of the formal safety guarantees. 
\item Minimal-Effort Control: The considered nominal behavior in this case does not satisfy the specification and the minimal-effort controller applies just enough acceleration to satisfy the safety specification and prevent collision. The vehicle’s speed increases slightly reflecting that only the minimal input is used.
\end{enumerate}

This analysis shows how the resilience metric $g_{\psi}$ characterizes the maximal allowable perturbation of the safety specification and allows the synthesis of controllers that ensure collision avoidance. By evaluating both effectiveness and efficiency, the effort metric enables the selection of control strategies that guarantee the specification while minimizing energy consumption.

Videos of the four scenarios using Carla simulator~\cite{CARLA} are available at this link: \href{https://youtu.be/soael9YJFFg}{https://youtu.be/soael9YJFFg}.
\subsection{Time-varying synchronous generator system}
In this example, we illustrate our theoretical findings on a discrete-time LTV system representing the sampled and linearized swing dynamics of a synchronous generator with an automatic voltage regulator (AVR). The system is written as in Equation~\eqref{eqn:time varying linear sys}, where the state vector is $x(k) = [\delta(k),\, \omega(k),\, e_{fd}(k)]^{\top}$, containing the rotor angle deviation $\delta(k)$ (rad), rotor speed deviation $\omega(k)$ (rad/s), and internal field voltage $e_{fd}(k)$. The control input is $u(k) = [u_1(k),\, u_2(k)]^{\top}$, where $u_1(k)$ denotes the mechanical torque (governor) command and $u_2(k)$ the AVR reference voltage input. The disturbance term $d(k)$ represents unmodeled load or torque variations acting on the generator shaft.
The time-varying state matrix $A_k \in \mathbb{R}^{3\times3}$ and input matrix $B_k \in \mathbb{R}^{3\times2}$ are given by
\begin{equation*}
A_k = 
\begin{bmatrix}
1 & T_s & 0 \\[3pt]
-\frac{T_sK_p(k)}{2H} &1 -\frac{T_s D}{2H} & -\frac{T_s K_f}{2H} \\[3pt]
0 & 0 & 1-\frac{T_s}{T_e}
\end{bmatrix}, \quad
B_k =
\begin{bmatrix}
0 & 0 \\[3pt]
\frac{T_s}{2H} & 0 \\[3pt]
0 & \frac{T_s}{T_e}
\end{bmatrix}.
\end{equation*}
The parameters are chosen to be physically representative of a medium-size synchronous machine where $H = 3.0$~s, $D = 0.5$, $K_f = 0.8$, $T_s = 0.5$~s, $T_e = 0.4$~s, and a slowly varying stiffness coefficient $K_p(k) = 2.0 - 0.05k$ modeling gradual operating-point changes. This LTV system reflects the generator’s primary frequency and voltage regulation dynamics and is representative of models used in modern power-system control studies~\cite{kundur1994power,lehn1999discrete}.
% \textcolor{red}{$F_0$ should be defined !!}
The desired behavior of the system is to prevent rotor excessive angle $\delta(k)$ deviations and the rotor speed deviation $\omega(k) $ to be near zero (synchronous speed), while ensuring that the internal field voltage $e_{fd}(k)$ adjusts appropriately in order to maintain voltage regulation. 
Figure~\ref{fig:time_varying_state_trajectories} shows the desired behavior given by the $LTL_f$ formula specification \begin{equation}
\label{spec:time_varying}
\psi = \square^{[2,30]} R_1 \wedge \square^{[0,30]} R_2, 
\end{equation} with $R_1 = [-0.5, 0.5] \times [-0.1, 0.1] \times [-2, 2]$, $ R_2 = [-2.0, 10.0] \times [-2, 10.0] \times [-5, 5],$ 
starting from point $x_0  = (0.5, 0, 0)$ and for $N= 30$ time steps using the open-loop resilient controller with input bound $\varepsilon = 0.38$.

% \begin{figure}[H]
%     \centering
%     \begin{subfigure}[H]{0.48\textwidth}
%         \centering
%         \includegraphics[width=\linewidth]{/Figures/state_trajectories_synchronous.png}
%         \caption{State trajectories showing rotor angle $\delta$ (rad), rotor speed $\omega$ (rad/s), and field voltage $e_{fd}$ (p.u.) over the prediction horizon $N= 30$ for the nominal and the controlled trajectories of the synchronous model. The shaded regions indicate constraint bounds that must be satisfied to respect the Specification~\eqref{spec:time_varying} by the controlled trajectories with the same color.}
%         \label{fig:state_trajectories}
%     \end{subfigure}
%     \hfill
%     \begin{subfigure}[H]{0.48\textwidth}
%         \centering
%         \includegraphics[width=\linewidth]{Journal version/Figures/control_inputs_optimal.png}
%         \caption{The normalized and synthesized open loop controller values corresponding to inputs $u_1$ (mechanical torque) and $u_2$ (AVR reference voltage) of the resilience metric. }
%         \label{fig:control_inputs}
%     \end{subfigure}
%     \caption{Nominal and controlled trajectories for the synchronous generator system along with the values of the controller deployed. The robust optimization ensures constraint satisfaction for input bound of  $0.38$ without adding disturbances.}
%     \label{fig:time_varying}
% \end{figure}

\begin{figure}[t]
     \centering
     \includegraphics[width=0.5\textwidth]{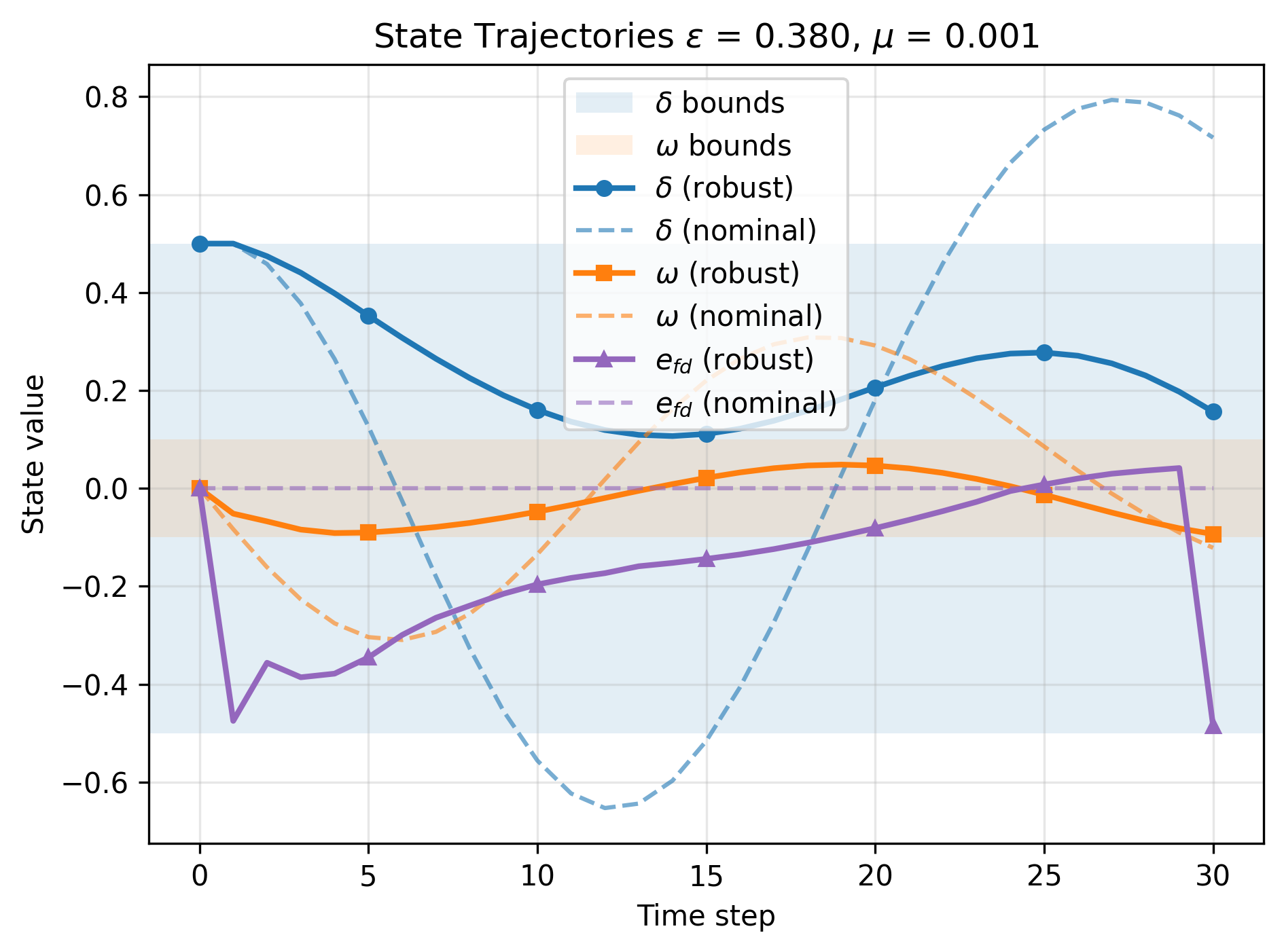}
     \caption{State trajectories showing rotor angle $\delta$ (rad), rotor speed $\omega$ (rad/s), and field voltage $e_{fd}$ over the horizon $N= 30$ for the nominal and the control trajectories of the considered model. The shaded regions indicate constraint bounds that must be satisfied to respect the specification $\psi$ in~\eqref{spec:time_varying} by the controlled trajectories with the same color.} \label{fig:time_varying_state_trajectories}
     \vspace{-0.3cm}
\end{figure}
\begin{figure}[t]
     \centering
     \includegraphics[width=0.5\textwidth]{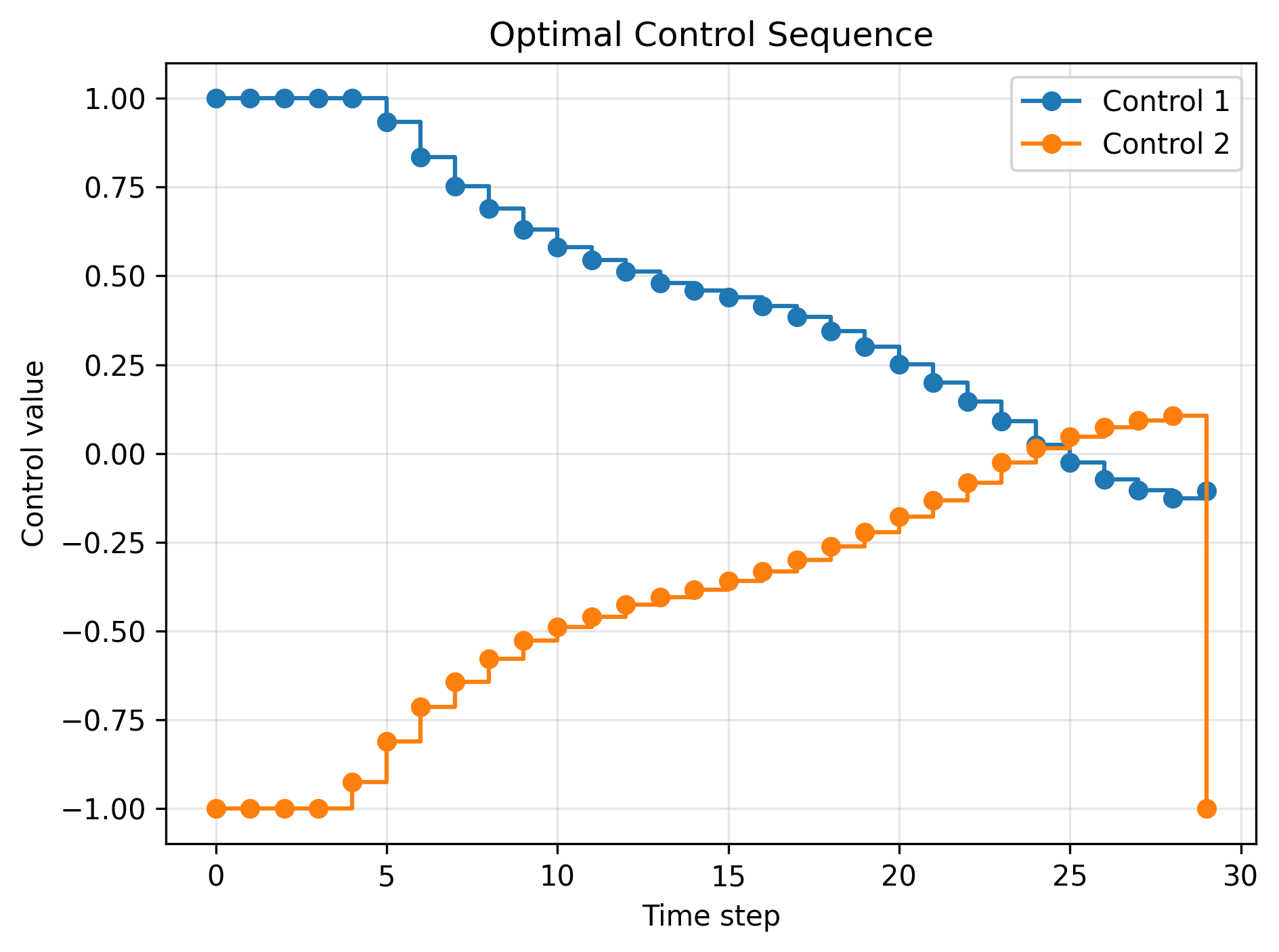}
     \caption{ The normalized and synthesized open-loop controller values corresponding to inputs $u_1$ (mechanical torque) and $u_2$ (AVR reference voltage) of the resilience metric. }
    \label{fig:control_inputs}
    \vspace{-0.5cm}
\end{figure}

The resilience metric computed using Corollary~\ref{theorem:2} is $ g_\psi(x_0) = 0.0031$. Its corresponding minimal effort $h_\psi(x_0, g_\psi(x_0)) = 0.397$ and the minimal effort metric $ h_\psi(x_0) = 0.367$ are both computed using Corollary~\ref{Theorem:5}. The computation time for resilience is $14s$ and the effort metric is $32 s$.
\section{Conclusion and future work}
\label{sec:conclusion}
% This paper is an initialization for resilience in controlled systems and can be a foundation for other studies like finding new way to solve the optimization problem using linear solver. Studying the properties of perturbed system with powerful tools to quantify the maximal tolerable disturbances, studying the controller synthesis of dynamical system while taking into account disturbances.
We provided a resilience metric framework for designing controllers for dynamical systems. For a given finite specification that defines the desired behavior of the system’s trajectory, this metric quantifies the optimal controller ensuring the system with input constraints satisfies the specification and the maximum disturbance for which the specification remain respected. In addition, we introduced a new \textit{effort metric} that quantifies the minimal input bound required to satisfy a given specification. For linear time variant dynamical systems and linear controllers, we demonstrated using Farkas' lemma how to compute the exact resilience and effort metric. In the general case of nonlinear systems and nonlinear controllers, a scenario approach applied to nonconvex optimization problems was used to derive these metrics while having a probabilistic guarantee on the solution. In future work, we aim to explore how the resilience and effort metric can be extended to interconnected control systems and continuous-time control dynamical systems.

% Limitation
% We cannot surpass 5 time step due to the explosion of running time for solving the optimization problem.
% Some trajectory controller diverge after a some time steps, and so we cannot apply any spcifcation atfer some horizon.
% the calculaton of the complexity increase exponontially with the increase of the number of scenarios. looking for more efficient algorithm to calculate it would be a promizing direction of research.

\bibliographystyle{IEEEtran}  % For IEEE style
% \section*{References}
\bibliography{references}     % Assumes your .bib file is "references.bib"
\appendix
\section{Auxilliary results}
The following theorem is a simple adaptation of the result in \cite[Corollary 7.lh]{Alexander1999}, and is known as the affine form of Farkas' lemma.
\begin{theorem}
\label{theorem farkas}
Suppose the set $\{x\mid E x \leq F\}$ is not empty. The following two statements are equivalent:
\begin{itemize}
    \item $Ex \leq F$ holds for all $x$ with $A x \leq b$;
    \item There exists a non-negative matrix $P$ such that $P A=E$ and $P b \leq F$.
\end{itemize}
\end{theorem}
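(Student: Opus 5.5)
The plan is to prove the two implications separately and to reduce the nontrivial one to linear-programming duality, treating the matrix inequality $Ex \le F$ one row at a time.

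The implication from the second statement to the first is immediate. If $P \ge 0$ satisfies $PA = E$ and $Pb \le F$, then for any $x$ with $Ax \le b$ I multiply $Ax \le b$ on the left by the nonnegative matrix $P$. This preserves the inequality and gives
\[
Ex = PAx \le Pb \le F.
\]

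For the converse, fix a row index $i$, and write $e_i^T$ for the $i$-th row of $E$ and $F_i$ for the $i$-th entry of $F$. The first statement says exactly that the linear program $\max\{e_i^T x \mid Ax \le b\}$ is bounded above by $F_i$. Its dual is $\min\{b^T p \mid A^T p = e_i,\ p \ge 0\}$. If the primal is feasible and bounded, strong duality gives a dual optimal $p_i \ge 0$ with $A^T p_i = e_i$ and $b^T p_i$ equal to the primal optimum, hence $b^T p_i \le F_i$. Stacking the vectors $p_i^T$ as the rows of $P$ yields $P \ge 0$, $PA = E$ and $Pb \le F$, which is the second statement. An equivalent route, if one prefers to avoid quoting LP duality, is the homogeneous Farkas lemma applied to the system $\{Ax - bt \le 0,\ t \ge 0,\ e_i^T x - F_i t > 0\}$, followed by a normalisation of the multiplier of $t$.

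The main obstacle is the nonemptiness hypothesis, since strong duality needs the primal $\{x \mid Ax \le b\}$ to be feasible. If that set is empty, the first statement holds vacuously, yet the second can fail. For example, $A = 0$ and $b = -\mathbf{1}$ force $PA = 0$, so $E$ would have to be $0$. I would therefore prove the lemma under the assumption that $\{x \mid Ax \le b\}$ is nonempty, which is the standard form of the affine Farkas lemma in \cite{Alexander1999}, and read the stated hypothesis in that way.

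This causes no loss in the intended application in Theorem~\ref{Theorem:7} and Theorem~\ref{Theorem:8}. There the set $\{Y \mid A_b Y \le B_b\}$ is the unit $\infty$-norm box, which contains $0$ and is therefore nonempty. The scaling by $\mu$ is then absorbed into $E$, as in the constraint $PA_b = \mu E^{cl}$.
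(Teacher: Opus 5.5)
Your proof is correct and is essentially the argument the paper relies on: the paper gives no proof of its own but cites \cite[Corollary 7.1h]{Alexander1999}, and your row-by-row LP-duality argument is the standard proof of that corollary. You are also right that the hypothesis has to be nonemptiness of $\{x\mid Ax\le b\}$, as in the cited corollary, rather than of $\{x\mid Ex\le F\}$. With the paper's hypothesis the statement is false: take $n=1$, $A=0$, $b=-1$, $E=1$, $F=0$. Then $\{x\mid Ex\le F\}\neq\emptyset$ and the first item holds vacuously, but no $P\ge 0$ satisfies $PA=E$. In Theorems~\ref{Theorem:7} and~\ref{Theorem:8} the set $\{Y\mid A_bY\le B_b\}$ is the unit box, which contains $0$, so those applications are unaffected.
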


\subsection{Proof of Theorem~\ref{Theorem:7}}

\begin{proof}
We have from the Definition~\ref{def:r_e_metric_unified} of resilience-effort metric that
\begin{equation}
\label{trade-off_def}
\begin{aligned}
p_\psi(x, w_1, w_2) = &\max_{\mu \ge 0, \varepsilon \ge 0} \; w_1 \mu - w_2 \varepsilon \\
\text{s.t.} \;\; & \exists \alpha_1 \in \mathbb{R}^{m \times n}, \alpha_2 \in \mathbb{R}^{m},\\
&\forall d(0), \dots, d(N-1) \in \Omega_{\mu}(0),\\
&G_k x(k) \leq H_k, \hspace{0.5cm} \text{for } k = 0,\dots,N,\\
& \alpha_1 x(j) + \alpha_2 \in \Omega_{\varepsilon}(0), \; \hspace{0.5cm} \text{for } j = 0,\dots,N-1.
\end{aligned}
\end{equation}

Letting $x = x(0)$, the solution of the closed-loop system at time $k \geq 1 $ is given by
\begin{equation}
x(k) =\underline{A}_{k,-1} x + \Big(\sum_{i = 0}^{k - 1} \underline{A}_{k,i} B_i\Big) \alpha_2 + \sum_{i = 0}^{k - 1}\underline{A}_{k,i}d_i,
\end{equation}
with $\underline{A}_{k,i} = \prod_{j=i+1}^{k-1}\bar{A}_j$ and the convention that an empty product equals the identity, i.e., $ \underline{A}_{k,k-1}=A_{1,0}=I_n$.

Let \(\alpha_1 \in \mathbb{R}^{m \times n}\) and \(\alpha_2 \in \mathbb{R}^{m}\). The state must satisfy \(G_k x(k) \leq H_k\) for \(k = 0,\dots,N\) and for all \(\mathbf{d} \in \Omega_{\mu}(0)^N\). For $k \in \{1, \dots, N \}$, substituting the expression of state trajectory \(x(k)\) yields
\[
G_k \left( \underline{A}_{k,-1} x + \sum_{i = 0}^{k - 1} \underline{A}_{k,i} B_i \alpha_2 + \sum_{i = 0}^{k - 1} \underline{A}_{k,i} d_i \right) \leq H_k.
\]
Rearranging terms to isolate the disturbance one gets

\[
G_k \sum_{i = 0}^{k - 1} \underline{A}_{k,i} d_i \leq H_k - G_k \underline{A}_{k,-1} x - G_k \left( \sum_{i = 0}^{k - 1} \underline{A}_{k,i} B_i \right) \alpha_2.
\]
This can be expressed compactly using matrix form as:
\[
E_{xk}(\alpha_1) Y \leq \frac{1}{\mu} F_{xk}(x, \alpha_1, \alpha_2, \varepsilon),
\]
where \(Y := \frac{1}{\mu} [0_{n\times 1},d_0, d_1, \ldots, d_{N-1}]^T \in \mathbb{R}^{n(N+1)}\), $E_{xk}$ and $F_{xk}$ are defined in~\eqref{Ex_Fx_closed_1}. The condition must hold for all $d_i$ such that $||d_i||_\infty \leq \mu$ which can be written in matrix form as $A_b Y \leq B_b$ for \(A_b\) and \(B_b\) defined in~\eqref{Ab_Bb_closed_1}.

Stacking these inequalities for \(k = 0, \dots, N\), we obtain

\begin{equation}
\label{state constraint closed 1}
E_x(\alpha_1) Y \leq \frac{1}{\mu} F_x(x, \alpha_1, \alpha_2),  \; \; \forall\, Y \; \;\text{such that} \; \; A_b Y \leq B_b,
\end{equation}
where \(E_x(\alpha_1)\), and \(F_x(x, \alpha_1, \alpha_2)\) are defined in~\eqref{Ex_Fx_closed_1}.
For the input constraints, the expression of the input at time $k \geq 0$ is
\begin{align*}
u(k) &= \alpha_1 x(k) + \alpha_2 \\
&= \alpha_1  \underline{A}_{k,-1} x 
+ \alpha_1 \sum_{i = 0}^{k - 1} \underline{A}_{k,i} B_i \alpha_2 + \\
& \qquad \alpha_1 \sum_{i = 0}^{k - 1} \underline{A}_{k,i} d(i) 
+ \alpha_2 \\
&= S_k(x, \alpha_1, \alpha_2) + \mu \alpha_1 E_k Y,
\end{align*}
where $S_k(x, \alpha_1, \alpha_2)$ and $E_k$ are defined in~\eqref{s_k} and \eqref{Ex_Fx_closed_1}.
% \[
% E_k = \left[0_n, \underbrace{\underline{A}_{k,0}, \underline{A}_{k,1}, \ldots, \underline{A}_{k, k-1}}_{k \text{ times}}, \underbrace{0_n,\dots, 0_n}_{N-k \text{ times}}\right] \in \mathbb{R}^{m \times n(N+1)}.
% \]
Therefore, the condition $\|u(k)\|_{\infty} \leq \varepsilon$ can be written in matrix form as
\begin{align*}
A_u u(k) \leq &\varepsilon B_u,\\
\iff A_u \mu E_k(\alpha_1) Y \leq&  \varepsilon B_u- A_u S_k(x, \alpha_1, \alpha_2)
\end{align*}
for $A_u$ and $B_u$ defined in~\eqref{A_u, b_u closed_1}. Stacking these inequalities for \(k = 0, \dots, N-1\), we obtain
\begin{equation}
\label{input constraint closed 1}
E_u(\alpha_1) Y \leq \frac{1}{\mu} F_u(x, \alpha_1, \alpha_2, \varepsilon), \; \; \forall\, Y \; \;\text{such that} \; \; A_b Y \leq B_b,
\end{equation}
where $E_u(\alpha_1)$ and $F_u(x, \alpha_1, \alpha_2, \varepsilon)$ are defined in~\eqref{E_u, F_u closed_1}.
Combining the matrix form of~\eqref{state constraint closed 1} and~\eqref{input constraint closed 1}, we get
\[
E^{cl} (\alpha_1)Y \leq \frac{1}{\mu} F^{cl}(x, \alpha_1, \alpha_2, \varepsilon), \; \; \forall\, Y \; \;\text{such that} \; \; A_b Y \leq B_b,
\]
where $E^{cl}(\alpha_1)  = [E_x(\alpha_1) , E_u(\alpha_1) ]^T$ and $F^{cl}(x, \alpha_1, \alpha_2, \varepsilon)  = [F_x(x,\alpha_1, \alpha_2), F_u(x, \alpha_1, \alpha_2, \varepsilon)]^T$.
Therefore, the feasibility conditions in \eqref{trade-off_def} can be written for $\alpha_1 \in \mathbb{R}^{m \times n}$ and $ \alpha_2 \in \mathbb{R}^{m}$ as
\begin{equation}
\label{matrix_form}
\begin{aligned}
 \, E^{cl}(\alpha_1) Y& \leq \frac{1}{\mu} F^{cl}(x, \alpha_1, \alpha_2, \varepsilon), \\
& \text{for all } Y \text{ satisfying } A_b Y \leq B_b \\
\iff\, \mu E^{cl}(\alpha_1) Y& \leq  F^{cl}(x, \alpha_1, \alpha_2, \varepsilon), \\
& \text{for all } Y \text{ satisfying } A_b Y \leq B_b   .
\end{aligned}
\end{equation}
Hence, using the affine form of Farkas' Lemma \cite{Alexander1999} (see Lemma~\ref{theorem farkas}) condition \eqref{matrix_form} holds if and only if there exists a nonnegative matrix $P \geq 0$ such that $P A_b= \mu E^{cl}(\alpha_1)$ and $P B_b \leq  F^{cl}(x, \alpha_1, \alpha_2, \varepsilon)$. Thus, the resilience-effort metric can be expressed as
\[
\begin{aligned}
p_\psi(x, w_1, w_2) = &\max_{\varepsilon \ge 0, \mu \ge 0} \; w_1\mu - w_2 \varepsilon \\
\text{s.t.} \;\; & \exists \alpha_1 \in \mathbb{R}^{m \times n}, \alpha_2 \in \mathbb{R}^{m}, P \geq 0,\\
& P A_b = \mu \,E^{cl}(\alpha_1) ,\\
&   P B_b \leq F^{cl}(x, \alpha_1, \alpha_2, \varepsilon_0), \\
& P \geq 0.
\end{aligned}
\]
Since the maximization is over all feasible $(\mu, \varepsilon, \alpha_1, \alpha_2, P)$, this is equivalent to the optimization problem in \eqref{eq:pareto_optimization}.
\end{proof}

\subsection{Proof of Theorem~\ref{Theorem:8}}

\begin{proof}
We have from the Definition~\ref{def:r_e_metric_unified} of resilience-effort metric that
\begin{equation}
\label{open_loop_trade_off}
\begin{aligned}
p_\psi(x, w_1, w_2) = &\max_{\mu \ge 0, \varepsilon \ge 0} \; w_1 \mu - w_2 \varepsilon \\
\text{s.t.} \;\; & \exists\uu \in \Omega_\varepsilon(0)^N,\\
&\forall d(0), \dots, d(N-1) \in \Omega_{\mu}(0),\\
&G_k x(k) \leq H_k, \hspace{0.5cm} \text{for } k = 0,\dots,N.\\
\end{aligned}
\end{equation}
The system is evolving according to
\begin{align*}
 x(k+1) &= A_kx(k) + B_ku(k) + d(k). 
\end{align*}
By denoting $x = x(0)$, the solution of the closed-loop system at time $k \geq 1$ is given by
\begin{equation}
\label{state_evolution_time_varying}
x(k) =  \prod_{i=0}^{k-1}A_i x + \sum_{i = 0}^{k - 1}\tilde{A}_{k,i}B_iu(i) + \sum_{i = 0}^{k - 1}\tilde{A}_{k,i}d(i),
\end{equation}
with $\tilde{A}_{k,i} = \prod_{j=i+1}^{k-1}A_j$ and the convention that an empty product equals the identity.
For a fixed normalized input sequence $\mathbf{\tilde{u}} = \frac{1}{\varepsilon}\uu \in \Omega_1(0)^N$, the condition $G_kx(k) \leq H_k, \text{ for } k  = 0, \dots,N$ must hold for all $\mathbf{d} \in \Omega_{\mu}(0)^N$, which can be expressed in matrix form as 
\[
E^{ol} Y \leq \frac{1}{\mu}F^{ol}(x, \varepsilon, \mathbf{\tilde{u}}) \;\; \forall\, Y \; \;\text{such that} \; \; \quad A_b Y \leq B_b,
\]
where $Y := \frac{1}{\mu}\left[0_{n \times 1},  d_0, d_1, \ldots, d_{N-1}\right]^T$ are free variables, and $A_b$, $B_b$, $F^{ol}(x, \mathbf{\tilde{u}}, \varepsilon)$ and $E^{ol}$ are defined in~\eqref{Ab_Bb_closed_1} and~\eqref{E_F_open_loop_1}, respectively. The inequality $A_b Y \leq B_b$ represents $\|Y\|_{\infty} \leq 1$ in matrix form. Therefore, the feasibility conditions in \eqref{open_loop_trade_off} can be written for $\mathbf{\tilde{u}} =  \Omega_1(0)^N$ as
\begin{equation}
\begin{aligned}
\, E^{ol} Y& \leq \frac{1}{\mu} F^{ol}(x, \uuu, \varepsilon), \\
& \text{for all } Y \text{ satisfying } A_b Y \leq B_b  \\
\iff  \, \mu E^{ol} Y &\leq  F^{ol}(x, \uuu, \varepsilon), \\
& \text{for all } Y \text{ satisfying } A_b Y \leq B_b   .
\end{aligned}
\end{equation}
Hence, using the affine form of Farkas' Lemma \cite{Alexander1999} (see Lemma~\ref{theorem farkas}) condition \eqref{matrix_form} holds if and only if there exists a nonnegative matrix $P \geq 0$ such that $P A_b= \mu E^{ol}$ and $P B_b \leq  F^{ol}(x, \uuu, \varepsilon)$. Thus, the set of achievable $(\mu, \varepsilon, \mathbf{\tilde{u}})$ triples is
\[
\mathcal{P} = \left\{ (\mu, \varepsilon, \mathbf{\tilde{u}}) \,\middle|\, 
\begin{aligned}
&\exists\, P \ge 0 \text{ such that } \\
&P A_b = \mu E^{ol},\;  P B_b \le F^{ol}(x, \varepsilon, \mathbf{\tilde{u}})
\end{aligned}
\right\}.
\]
This yields the optimization problem
\[
\max_{(\mu, \varepsilon, \mathbf{\tilde{u}}) \in \mathcal{P}} \; w_1\mu - w_2\varepsilon.
\]
Substituting the explicit characterization of $\mathcal{P}$ and jointly optimizing over all decision variables $(\mu, \varepsilon, \mathbf{\tilde{u}}, P)$ gives exactly \eqref{eq:pareto_open_loop}.
\end{proof}

\end{document}